\newcommand{\ppf}{\ensuremath{\mathcal{PPF}}}
\newcommand{\ppl}{\ensuremath{\mathcal{PPL}}}
\newcommand{\kbl}{\ensuremath{\mathcal{KBL}}}
\newcommand{\sat}{\ensuremath{\models}}
\newcommand{\der}{\ensuremath{\vdash}}
\newcommand{\conf}{\ensuremath{\models_C}}
\newcommand{\poval}{\ensuremath{\pi}}
\newcommand{\poSet}{\ensuremath{\Pi}}
\newcommand{\conSet}{\ensuremath{\mathcal{C}}}
\newcommand{\struct}{\ensuremath{\mathcal{A}}}
\newcommand{\rtpolicy}[3]{\ensuremath{\llbracket #1 \rrbracket_{#2}^{#3}}}
\newcommand{\kb}{\ensuremath{\mathit{KB}}}
\newcommand{\evtSet}{\ensuremath{\mathit{EVT}}}
\newcommand{\wkbl}{\ensuremath{\mathcal{F_{KBL}}}}
\newcommand{\actSet}{\ensuremath{\Sigma}}
\newcommand{\etal}{\mbox{\emph{et al.\ }}}
\newcommand{\timestampset}{\ensuremath{\mathbb{T}}}
\definecolor{dark-green}{rgb}{0.0, 0.5, 0.0}
\newcommand{\nat}{\ensuremath{\mathbb{N}}}
\newcommand{\mrstn}{\ensuremath{\mathcal{M}^{\mathit{rst}}_n}}
\newcommand{\langn}{\ensuremath{\mathcal{L}_n}}
\newcommand{\aguni}{\ensuremath{\mathcal{AU}}}
\newcommand{\voca}{\ensuremath{\mathcal{T}}}
\newcommand{\intsys}{\ensuremath{\mathcal{I}}}
\newcommand{\snv}{\ensuremath{\mathit{SN}}}
\newcommand{\ag}{\ensuremath{\mathit{Ag}}}
\newcommand{\kbagent}{\ensuremath{\mathit{kb}}}
\newcommand{\Axioms}{\ensuremath{\Delta}}
\newcommand{\ekb}{\ensuremath{\text{EKB}}}
\newcommand{\ekbs}{\ensuremath{\text{EKBs}}}
\newcommand{\vekb}[2]{\ensuremath{\mathit{EKB}^{#1}_{#2}}}
\newcommand{\pre}{\ensuremath{\mathit{pred}}}
\newcommand{\nex}{\ensuremath{\mathit{next}}}
\newcommand{\kdfourfiveaxioma}{\textbf{KD45}}
\newcommand{\sfiveaxioma}{\textbf{S5}}
\newcommand{\Know}[2]{\ensuremath{K^{#1}_{#2}}}
\newcommand{\Learnt}[2]{\ensuremath{L^{#1}_{#2}}}
\newcommand{\Forget}[2]{\ensuremath{F^{#1}_{#2}}}
\newcommand{\Believe}[2]{\ensuremath{B^{#1}_{#2}}}
\newcommand{\LearnBelieve}[2]{\ensuremath{A^{#1}_{#2}}}
\newcommand{\ForgetBelieve}[2]{\ensuremath{R^{#1}_{#2}}}
\newcommand{\SKnows}[2]{\ensuremath{S^{#1}_{#2}}}
\newcommand{\EKnows}[2]{\ensuremath{E^{#1}_{#2}}}
\newcommand{\actionPredicate}[2]{\ensuremath{a^{#1}_{#2}}}
\newcommand{\Permitted}[2]{\ensuremath{P^{#1}_{#2}}}
\newcommand{\connectionPredicate}[2]{\ensuremath{c^{#1}_{#2}}}
\newcommand{\predicate}[1]{\ensuremath{\vpred^{#1}}}
\newcommand*{\tfppf}{\ensuremath{\mathcal{PPF_T}}}
\newcommand*{\tkbl}{\ensuremath{\mathcal{KBL_T}}\xspace}
\newcommand*{\tppl}{\ensuremath{\mathcal{PPL_{T}}}}
\newcommand*{\fancy}[1]{\ensuremath{\mathcal{#1}}}
\newcommand*{\formulae}{\fancy{F}}
\newcommand*{\pol}[2]{\ensuremath{\llbracket #1 \rrbracket_\text{#2}}}
\newcommand*{\then}{\ensuremath{\implies}}
\newcommand*{\al}{\ensuremath{\Box}}
\newcommand*{\ev}{\ensuremath{\Diamond}}
\newcommand*{\tuple}[1]{\ensuremath{\langle #1 \rangle}}
\newcommand{\wftraceset}{\ensuremath{\mathcal{WFT}}}
\newcommand{\rtppf}{\ensuremath{\ppf_{\mathcal{RT}}}}
\newcommand{\rtfppf}{\ensuremath{\rtppf}}
\newcommand{\rtsn}{\ensuremath{\mathcal{SN_{RT}}}}
\newcommand{\rtkbl}{\ensuremath{\kbl_{\mathcal{RT}}}}
\newcommand*{\frtkbl}{\ensuremath{\mathcal{F_{\rtkbl}}}}
\newcommand{\rtppl}{\ensuremath{\ppl_{\mathcal{RT}}}}
\newcommand*{\rtsnm}{SNM}
\newcommand*{\snm}{SNM}
\newcommand*{\snms}{SNMs}
\newcommand{\alice}{\ensuremath{\mathit{Alice}}}
\newcommand{\bob}{\ensuremath{\mathit{Bob}}}
\newcommand{\charlie}{\ensuremath{\mathit{Charlie}}}
\newcommand{\loc}{\ensuremath{\mathit{loc}}}
\newcommand{\post}{\ensuremath{\mathit{post}}}
\newcommand{\Friendship}{\ensuremath{\mathit{Friendship}}}
\newcommand{\friendship}{\ensuremath{\mathit{friendship}}}
\newcommand{\friendRequest}{\ensuremath{\mathit{friendRequest}}}
\newcommand{\FriendRequest}{\ensuremath{\mathit{FriendRequest}}}
\newcommand{\friend}{\ensuremath{\mathit{friend}}}
\newcommand{\event}{\ensuremath{\mathit{event}}}
\newcommand{\pic}{\ensuremath{\mathit{picture}}}
\newcommand{\acceptFollowReq}{\ensuremath{\mathit{acceptFollowReq}}}
\newcommand{\share}{\ensuremath{\mathit{share}}}
\newcommand{\disallowLoc}{\ensuremath{\mathit{disallowLoc}}}
\newcommand{\pub}{\ensuremath{\mathit{pub}}}
\newcommand{\weekend}{\ensuremath{\mathit{weekend}}}
\newcommand{\work}{\ensuremath{\mathit{work}}}
\newcommand{\dr}{\ensuremath{\mathit{DR}}}
\newcommand{\premise}{\ensuremath{\mathrm{PREMISE}}}
\newcommand{\enter}{\ensuremath{\mathsf{enter}}}
\newcommand{\eightpm}{\ensuremath{20\colon00}}
\newcommand{\sevenpm}{\ensuremath{19\colon00}}
\newcommand{\tenpm}{\ensuremath{22\colon00}}
\newcommand{\conservative}{\ensuremath{\mathit{conservative}}}
\newcommand{\susceptible}{\ensuremath{\mathit{susceptible}}}
\newcommand{\window}{\ensuremath{\omega}}
\newcommand{\agentype}{\ensuremath{\beta}}
\newcommand*{\valice}{\mword{Alice}}
\newcommand*{\vbob}{\mword{Bob}}
\newcommand*{\vdiane}{\mword{Diane}}
\newcommand*{\vgr}{\mword{G}}
\newcommand*{\vts}{\ensuremath{t}}
\newcommand*{\vaga}{\ensuremath{a}}
\newcommand*{\vagb}{\ensuremath{b}}
\newcommand*{\vagi}{\ensuremath{i}}
\newcommand*{\vagj}{\ensuremath{j}}
\newcommand*{\vwindow}{\ensuremath{w}}
\newcommand*{\vag}{\vagi}
\newcommand*{\vtrace}{\ensuremath{\sigma}}
\newcommand*{\vrtsn}{\mword{SN}}
\newcommand*{\vevs}{\ensuremath{E}}
\newcommand*{\vpolicy}{\ensuremath{\delta}}
\newcommand*{\vfa}{\ensuremath{\phi}}
\newcommand*{\vfb}{\ensuremath{\psi}}
\newcommand*{\vfc}{\ensuremath{\gamma}}
\newcommand*{\vf}{\vfa}
\newcommand*{\vevt}{\ensuremath{e}}
\newcommand*{\vevts}{\ensuremath{E}}
\newcommand*{\vnata}{\ensuremath{k}}
\newcommand*{\vnat}{\vnata}
\newcommand*{\vitera}{\ensuremath{i}}
\newcommand*{\viterb}{\ensuremath{j}}
\newcommand*{\viter}{\vitera}
\newcommand*{\vstart}{\ensuremath{s}}
\newcommand*{\vpred}{\ensuremath{p}}
\newcommand*{\term}{\ensuremath{s}}
\newcommand*{\vterm}{\vv{\term}}
\newcommand*{\mword}[1]{\ensuremath{\mathit{#1}}}
\newcommand*{\traceset}{\mword{TCS}}
\newcommand*{\tracets}{\ensuremath{\timestampset_\vtrace}}
\newcommand*{\trans}[1]{\ensuremath{\xrightarrow{#1}}}
\newcommand*{\mts}[1]{\ensuremath{\text{#1}}}
\newcommand*{\occur}[1]{\mathit{occurred}^{#1}}
\newcommand{\domSet}{\ensuremath{\mathcal{D}}}
\newcommand*{\frtppl}{\fancy{F_{PPL_{RT}}}}
\let\temp\phi
\let\phi\varphi
\let\varphi\temp
\begin{document}

\title{Timed Epistemic Knowledge Bases for Social Networks}
\subtitle{(Extended Version)}

\author{Ra\'ul Pardo}
\affiliation{%
  \institution{Chalmers $\mid$ University of Gothenburg}
  \city{Gothenburg}
  \state{Sweden}
}
\email{pardo@chalmers.se}

\author{C\'esar S\'anchez}
\affiliation{%
  \institution{IMDEA Software}
  \city{Madrid}
  \state{Spain}
}
\email{cesar.sanchez@imdea.org}

\author{Gerardo Schneider}
\affiliation{%
  \institution{Chalmers $\mid$ University of Gothenburg}
  \city{Gothenburg}
  \state{Sweden}
}
\email{gersch@chalmers.se}

\renewcommand{\shortauthors}{R. Pardo et al.}


\begin{abstract}
    We present an epistemic logic equipped with time-stamps in the atoms
  and epistemic operators, which allows to reason not only about
  information available to the different agents, but also about the
  moments at which events happen and new knowledge is acquired or
  deduced.
  Our logic includes both an epistemic operator and a belief operator,
  which allows to model the disclosure of information that may not be
  accurate.

  Our main motivation is to model rich privacy policies in online
  social networks.
  \emph{Online Social Networks} (OSNs) are increasingly used for
  social interactions in the modern digital era, which bring new
  challenges and concerns in terms of privacy.
  Most social networks today offer very limited mechanisms to express
  the desires of users in terms of how information that affects their
  privacy is shared.
  In particular, most current privacy policy formalisms allow only
  \emph{static policies}, which are not rich enough to express timed
  properties like ``\emph{my location after work should not be
  disclosed to my boss}''.
  The logic we present in this paper enables to express rich
  properties and policies in terms of the knowledge available to the
  different users and the precise time of actions and deductions.
  Our framework can be instantiated for different OSNs, by specifying
  the effect of the actions in the evolution of the social network and
  in the knowledge disclosed to each agent.

  We present an algorithm for deducing knowledge, which can also be
  instantiated with different variants of how the epistemic
  information is preserved through time.
  Our algorithm allows to model not only social networks with eternal
  information but also networks with ephemeral disclosures.
  Policies are modelled as formulae in the logic, which are interpreted
  over timed traces representing the evolution of the social network.
%

\end{abstract}

\maketitle

%
%

\section{Introduction}
\label{sec:intro}


{\em Online Social Networks} also known as {\em Social Networking
  Sites}, like Facebook~\cite{FacebookWeb}, Twitter~\cite{TwitterWeb}
and Snapchat~\cite{SnapchatWeb} have exploded in popularity in recent
years.
According to a recent survey~\cite{SNSuse} nearly~70\% of the Internet
users are active on social networks.

Some concerns, including privacy, have arisen alongside this
staggering increase in usage.
Several studies~\cite{MJB12spse+,JEM12fpc,YKBA11afps+,MJB11fosn+}
report that privacy breaches are growing in number.
Currently, the most popular social networks do not offer mechanisms
that users can use to guarantee their desired privacy effectively.
Moreover, virtually all privacy policies are static and cannot express
timing preferences, including referring to temporal points explicitly
and policies that evolve in time.

In~\cite{PKSS16sepposn} we presented a privacy policy framework able to express
dynamic privacy policies by introducing an explicit learning operator
and time intervals in the semantics of policies.
%
%
The framework consists of a {\em knowledge based logic} to characterise what users in
the social network know, and
a {\em privacy policy language}, based on the previous logic, where users can limit
who can know their information and when.
Policies and formulae in the logic are interpreted over {\em social network models} which faithfully represent the social graph of OSNs.
The policy language allows for the representation of recurrent privacy policies, .e.g.,
\emph{"During the weekend only my friends can see my pictures"}.
Note that the previous policy only requires to activate the static policy
\emph{"only my friends can see my pictures"} during weekends.
%

Though quite expressive, a major restriction of the logic in~\cite{PKSS16sepposn} is that it does not have {\em time}, so one cannot express different instants at which an
event happens and when knowledge is acquired.
%
Moreover, the logic only includes a knowledge modality thus implicitly assuming that
the information that users are told is true.
This assumption is, however, not realistic in social networks as
users may also have \emph{beliefs} given that not all sources of information are
truthful. Information that users disclose might be false.
There exists a growing interest in the detection of fake
news~\cite{fakenews1,fakenews2,fakenews3}.

To address this issues we propose to define a logic that:
\begin{inparaenum}[i)]
  \item is tailored for social networks, i.e., allows for expressing properties
  based on the social connections between users;
  \item combines knowledge and belief modalities to differentiate between
  beliefs that might be false and true knowledge;
  \item has time-stamps in modalities and atoms, thus enabling the possibility
  of referring to the time of information, e.g., a weekend location, and the
  moment when it was learnt.
\end{inparaenum}

There exist already some logics that include these elements---but
separately.
One line of work studies logics where the belief modality and atoms are
time-stampted~\cite{ZDDT15arbat}.
Unfortunately, this logic lacks of a knowledge modality, and is not tailored for
social networks.
Instead, it is defined to reason about AGM belief revision, which is a more
general setting than privacy in social networks.
A logic to reason about how beliefs spread out in Twitter has been introduced
in
\cite{XASZ17tlt}.
This logic, though tailored for social networks, does not include time-stamps,
so it cannot be used for reasoning about time.
Finally,
\cite{HSS09dktb} propose an axiomatisation of epistemic
logic which combines knowledge and belief, but again, it
does not contain time-stamps in modalities or atoms.

In this paper we leverage the insights from previous work to define a logic that
combines knowledge, belief and time and is tailored for the purpose of
specifying {\em dynamic privacy policies} for social networks.
We refer the reader to Section~\ref{sec:related} for a more detailed comparison
of related works.

More concretely, we extend~\cite{PKSS16sepposn} by enriching the logic with
explicit time instants, both in the atoms and in the epistemic operators.
In the resulting logic, one can refer to the instant at which some
knowledge is inferred, for example about the knowledge of another
agent at another instant.
The expressive power of the new logic allows to derive the
\emph{learning} operator from the time-stamped knowledge.

Second, we equip the logic with {\em belief} operators, with the only
restriction that agents cannot believe in something that they know is false, and
we define a belief propagation algorithm which guarantees that agents' beliefs
are always consistent.
This allows the instantiation of the framework to OSNs where gossiping
is allowed, that is, the spreading of potentially false information.
%
Analogous to the learning operator, we derive the \emph{accept}
operator as the moment in which an agent start believing in something.

Third, we introduce the notion of \emph{extended knowledge bases}
which allow to answer queries of (temporal) epistemic formulas against
the knowledge acquired during a sequence of events.
The algorithm uses epistemic deductive reasoning starting from a set
of axioms, which include the corresponding conventional axioms of
epistemic reasoning populated for all instants.
Depending on the desired instantiation of the framework, the axiom of
perfect recall or weaker versions of it can be instantiated which
allows to model knowledge acquisition in eternal OSNs like Facebook
and ephemeral OSNs like Snapchat.
When weaker versions are used, one can derive operators that capture
when an agent stops knowing something (the {\em forget} operator) or stops
believing in something (the {\em reject} operator).
Using extended knowledge bases we define the semantics of the logic
used to construct dynamic privacy policies.
This provides the means for a particular OSNs to enforce privacy
policies, for example by blocking an event that would result in a
violation of a user's policy.
We illustrate with examples how our rich logic allows to express
privacy policies and how agents can infer knowledge in different
instantiations of the framework.

Fourth, We prove that the model checking problem for this logic is decidable by
providing a model checking algorithm that is also used to check policy
conformance.
As a result, policy conformance is also decidable.

The purpose of this paper is to present an expressive foundation for developing
algorithms for detecting privacy violations and for privacy enforcement.
We leave the discussion of specialized efficient algorihtms for future work.

The rest of the paper is organised as follows.
Section~\ref{sec:real-time-rtppf} presents the framework and
Section~\ref{sec:syntax-of-rtkbl:rt} introduces the logic \rtkbl.
Section~\ref{sec:rtppl} shows how to express privacy policies using \rtkbl.
Section~\ref{sec:rtppl-examples} shows how different existing
OSNs can be modeled within our framework, including Snapchat.
Finally, Section~\ref{sec:related} presents related work and
Section~\ref{sec:conclusion} includes concluding remarks.
%

%
%

\section{A Timed Privacy Policy Framework}
\label{sec:real-time-rtppf}
%
We introduce a formal privacy policy framework for OSNs where
properties regarding users knowledge and beliefs as well as time can
be expressed.
Our framework extends the framework in~\cite{PKSS16sepposn}, in which
temporal properties were expressed using temporal operators \al\ and
\ev.
Our solution allows to describing properties at concrete moments in
time.
We also introduce a modality for beliefs which allows to distinguish
between information that might be inaccurate.

Our framework, called \rtppf, consists of the following components:
\begin{enumerate}
\item A timed epistemic logic, \rtkbl, where modalities and predicates
  are timestamped.
\item Extended social graphs, called \emph{social network models}, which
  describe the state of the OSN.
  These graphs contain the users or \emph{agents} in the system and
  the relations between them, and their knowledge and beliefs.
  We use $\rtsn$ to denote the universe of social network models, and
  use the notion of trace of social network models to describe the
  evolution of the system.
\item A parameter $\window \in \nat$ which determines for how long
  users remember information, e.g., $5$ seconds, $24$ hours or
  indefinitely.
  Users can acquire new believes that challenge their current
  knowledge and believes, which requires a resolution.
  We illustrate this by a parameter $\agentype$, with two
  possibilities \emph{conservative} or \emph{susceptible}, which
  specifies how users behave when they learn new beliefs which are
  inconsistent with their current beliefs.
\item A privacy policy language based on the previous logic.
\end{enumerate}



\section{A Timed Knowledge Based Logic}
\label{sec:syntax-of-rtkbl:rt}

\rtkbl\ is a knowledge based first order logic which borrows
modalities from epistemic logic~\cite{FHM+95rk}, equips modalities and
predicates with time-stamps, and allows quantifiers over time-stamps.

\subsection{Syntax}

Let \voca~be a \emph{vocabulary} which consists of a set of predicate
symbols, function symbols, and constant symbols.
Predicate and functions symbols have some implicit arity.
We assume an infinite supply of variables $x, y,\ldots$.
Terms of the elements of \voca~ can be built as $\term ::= c \mid x
\mid f(\vterm)$ where \vterm~is a tuple of terms respecting the arity
of $f$.

Let \timestampset\ denote a set of \emph{time-stamps}, which is required
to be a non-Zeno totally ordered set, that is, there is a finite
number of instants between any two given instants.
We use time-stamps to mark pieces of information or to query the
knowledge of the agents at specific points in time.
We consider \ag{} be a set of agents, \domSet\ a set of domains, and
use \evtSet{} for set of events that can be executed in a social
network.
For instance, in Facebook, users can share posts, upload pictures,
like comments, etc.
The set of events that users can perform depends on the social
network.
Similarly, we use $\conSet$ and $\actSet$ to denote special sets of
predicate symbols that denote connections (between agents) and
permissions.
We introduce the syntax of \rtkbl~as follows:

\begin{definition}[Syntax of \rtkbl{}]
\label{def:syntax-of-rtkbl:rt}
Given agents $ \vagi, \vagj \in \ag $ a time-stamp $ \vts \in
\timestampset$, an event $ \vevt \in \evtSet $, a variable $ x $, a
domain $D \in \domSet$, predicate symbols $ \connectionPredicate{\vts}{c}(\vagi, \vagj),
\actionPredicate{\vts}{a}(\vagi, \vagj), \predicate{\vts}(\vterm)$ where $c \in \conSet $ and $a
\in \actSet $, the \emph{syntax of the real-time knowledge-based logic
  \rtkbl{}} is inductively defined as:
   \begin{align*}
     \phi &::= \rho \mid
     \phi \land \phi \mid
     \neg \phi \mid
     \forall t \cdot \phi \mid
     \forall x : D \cdot \phi \mid
     \Know{\vts}{\vagi} \phi \mid
     \Believe{\vts}{\vagi} \phi
     \\
     \rho &::= \connectionPredicate{\vts}{c}(\vagi, \vagj) \mid
     \connectionPredicate{\vts}{a}(\vagi, \vagj) \mid
     \predicate{\vts}(\vterm) \mid
     \occur{\vts}(\vevt)
   \end{align*}
   Given a nonempty set of agents $ \vgr \subseteq \ag $, the
   additional epistemic modalities are defined $\SKnows{\vts}{G} \phi
   \triangleq \bigvee_{i \in G} \Know{\vts}{i} \phi $, $
   \EKnows{\vts}{G} \phi \triangleq \bigwedge_{i \in G} \Know{\vts}{i}
   \phi$.
 \end{definition}

The epistemic modalities stand for:
\begin{inparaenum}
\item[$\Know{\vts}{\vagi} \phi$,] agent $i$ knows $\phi$ at time $\vts$;
\item[$S^{\vts}_G \phi$,] someone in the group $G$ knows $\phi$ at time $\vts$;
\item[$E^{\vts}_G \phi$,] everyone in the group $G$ knows $\phi$ at time $\vts$.
\end{inparaenum}
We use the following notation as syntactic sugar $P^j_ia^t\triangleq
a(i,j,t)$ , meaning that ``\emph{agent $i$ is permitted to execute
  action $a$ to agent $j$}''.
For example, $P_\bob^\alice \mathit{friendRequest}^{5}$ means that Bob
is allowed to send a friend request to Alice at time $5$.
We will use \frtkbl{} to denote the set of all well-formed \rtkbl{}
formulae.
The syntax introduces the following novel notions that have not been
considered in other formal privacy policies languages such as
\cite{F11rbacppl,BFS+12rbaceehl,PardoLic,PKSS16sepposn}.

\textit{Time-stamped Predicates}.
Time-stamps are explicit in each predicate, including connections and
actions.
A time-stamp attached to a predicate captures those moment in time when
that particular predicate holds.
For instance, if Alice and Bob were friends in a certain time period,
then the predicate $\friend^{\vts}(\valice, \vbob)$ is true for all
\vts{} falling into the period, and false for all \vts{} outside.
This can be seen as the \emph{valid time} in temporal
databases~\cite{temporaldatabases}.

\textit{Separating Knowledge and Belief}.
Not all the information that users see in a social network is true.
For instance, Alice may tell Bob that she will be working until late,
whereas she will actually go with her colleagues to have some beers.
In this example, Bob has the (false) belief that Alice is working.

Traditionally, in epistemic logic, the knowledge of agents consists on
true facts.
Potentially false information is regarded as beliefs~\cite{FHM+95rk}.
The set of axioms \sfiveaxioma{} characterise knowledge and the axioms
of \kdfourfiveaxioma~characterise belief~\cite{FHM+95rk}.
For \rtkbl~we combine both notions in one logic.
In the following section we describe how to combine these two
axiomatisations based on the results proposed by Halpern \etal
in~\cite{HSS09dktb}.

\textit{Time-stamped Epistemic Modalities}.
Time-stamps are also part of the epistemic modalities $\Know{}{}$ and
$\Believe{}{}$.
Using time-stamps we can refer to the knowledge and beliefs of the
agents at different points in time.
For example, the meaning of the formula $\Believe{\eightpm}{\bob}
\loc^{\sevenpm}(\alice,\work)$ is that Bob beliefs at \eightpm{} that
Alice's location at \sevenpm{} is work.

\textit{Occurrence of Events}.
It is important to be able to determine when an event has occurred.
Such an expressive power allows users to define policies that are
activated whenever someone performs an undesired event.
Examples of these policies are: ``if Alice unfriends Bob, she is not
allowed to send Bob a friend request'' or ``if a Alice denies an
invitation to Bob's party, then she cannot see any of the pictures
uploaded during the party.''

Here we introduce $\occur{\vts}(\vevt)$ to be able to syntactically
capture the moment when a specific event $e$ occurred.
A similar predicate was introduced by Moses \etal
in~\cite{ben2013agent} for analysing communication protocols.
%

\subsection{Semantics}

\subsubsection{Real-Time Social Network Models}
\label{sec:rtsnm}

We introduce formal models which allow us to reason about specific social network states at a given moment in time.
These models leverage the information in the social graph~\cite{K14cn}---the core data model in most social networks~\cite{twitterSocialGraph,facebookSocialGraph,linkedinSocialGraph}.
Social graphs include the users (or \emph{agents}) and the relationships between them.
Moreover, in our models we include a knowledge base for each agent, and the set of privacy policies that they have activated.
We reuse the models defined for the previous version of this framework~\cite{PKSS16sepposn}.
Nevertheless, the expressiveness of the privacy policies that can be enforced in \rtppf~have substantially increased (see Section~\ref{sec:rtppl}).

\begin{definition}[Social Network Models]
\label{def:rtsnm}
   Given
   a set of formulae $\formulae \subseteq \frtkbl$,
   a set of privacy policies \poSet, 
   and a finite set of agents $ \ag \subseteq \aguni $ from a universe \aguni,
   a \emph{social network model} (\snm) is a tuple
   $ \langle \ag, \struct, \kb, \poval \rangle $,
   where
   \begin{inparaitem}

      \item
         \ag~is a nonempty finite set of nodes representing the agents in the
         social network;

      \item
         $\struct$ is a first-order structure over the \rtsnm. As usual, it
         consists of a set of domains, and a set relations, functions and
         constants interpreted over their corresponding domain.

      \item
         $ \kb: \ag \to 2^{\formulae} $ is a function retrieving a set of knowledge of an agent---each piece with an associated time-stamp. The set corresponds to the facts stored in the knowledge base of the agent; we write $\kb_i$.
         for $ \kb(i) $;

      \item
         $ \pi: \ag \to 2^\poSet $ is a function returning the set of privacy
         policies of each agent; we write $\poval_i $ for $ \poval(i)$.

   \end{inparaitem}
\end{definition}
In Def.~\ref{def:rtsnm}, the shape of the relational structure $\struct$ depends on the type of the social network under consideration.
We represent the connections---edges of the social graph---and the permission actions between social network agents, as families of binary relations, respectively $ \{C_i\}_{i \in \conSet} \subseteq \ag \times \ag $ and $ \{A_i\}_{i \in \actSet} \subseteq \ag \times \ag $ over the domain of agents.
We use $\{D_i\}_{i \in \domSet}$ to denote the set of domains.
The set of agents $\ag$ is always included in the set of domains.
We use $\conSet, \actSet \text{ and } \domSet$ to denote sets of connections, permissions and domains, respectively.

\subsubsection{Evolution of Social Network Models}
\label{sec:evolving-osns:rt}

The state of a social network changes by means of the execution of \emph{events}.
For instance, in Facebook, users can share posts, upload pictures, like comments, etc.
The set of events that users can perform depends on the social network.
We denote the set of events that can be executed in a social network as \evtSet.
We use traces to capture the evolution of the social network.
Each element of the trace is a tuple containing:
a social network model,
a set of events, and
a time-stamp.

\begin{definition}[Trace]
\label{def:trace:rt}
   Given $ \vnat \in \nat $, a trace $ \vtrace $ is a finite sequence
   $$ \vtrace = \tuple{(\vrtsn_0, \vevts_0, \vts_0), (\vrtsn_1, \vevts_1, \vts_1), \dots, (\vrtsn_\vnat, \vevts_\vnat, \vts_\vnat)} $$
   such that, for all $ 0 \leq \viter \leq \vnat $, $ \vrtsn_\viter \in \rtsn $, $
   \vevts_\viter \subseteq \evtSet $, and $ \vts_\viter \in \timestampset $.

\end{definition}

We define $ \tracets = \{ \vts \mid (\vrtsn, \vevts, \vts) \in \vtrace \} $ to be the set of all the time-stamps of $\vtrace$.
We impose some conditions to traces so that they accurately model the evolution of social networks.
We say that a trace is \emph{well-formed} if it satisfies the following conditions:

\paragraph{Ordered time-stamps.}
Time-stamps are strictly ordered from smallest to largest.

\paragraph{Accounting for Events.}
The definition has to account for events being explicit in the trace.
Let $\trans{}$ be a transition relation defined as $\trans{} \; \subseteq \rtsn \times 2^\evtSet \times \timestampset \times \rtsn $.
We have $\tuple{\vrtsn_1, \vevs, \vts, \vrtsn_2} \in \; \trans{} $ if $ \vrtsn_2 $ is the result of the set of events $ \vevs \in \evtSet $ happening in $ \vrtsn_1 $ at time \vts{}.
Note that we allow \vevs{} to be empty, in which case $ \vrtsn_2 = \vrtsn_1 $.
We will use the more compact notation of $\vrtsn_1 \trans{\vevs, \vts} \vrtsn_2$ where appropriate.

\paragraph{Events are independent.}
For each $\trans{E,\vts}$ the set of events $E$ must only contain independent events.
Two events are independent if, when executed sequentially, the execution order does not change their behaviour.
Consider the events: $\post(\charlie,\bob,"London")$ (Charlie shares a post containing Bob's location), and $\friendRequest(\alice,\charlie)$ (Alice sends a friend request to Charlie).
Independently of the order in which the previous events are executed the resulting \rtsnm\ will have a new post by Bob, and Charlie will receive a friend request from Alice.
On the other hand, consider now: $\post(\charlie, \bob,``London'')$ and $\disallowLoc(\bob)$ (Bob activates a privacy policy which forbids anyone to disclose his location).
In this case, if $\post(\charlie, \bob,``London'')$ is executed first, the resulting \rtsnm\ will contain the post by Charlie including Bob's location.
However, if $\disallowLoc(\bob)$ occurs before $\post(\charlie, \bob,``London'')$, Charlie's post would be blocked---since it violates Bob's privacy policy.
These two events are not independent.

More formally,

\begin{definition}
 Given two events $e_1, e_2 \in \evtSet$, we say that $e_1$ and $e_2$ are independent iff for any two traces $\vtrace_1$ and $\vtrace_2$
  $$\vtrace_1 = \snv^{0}_1 \trans{\{e_1\}, \vts} \snv^{1}_1 \trans{\{e_2\}, \vts'}  \snv^{2}_1$$
  $$\vtrace_2 = \snv^{0}_2 \trans{\{e_2\}, \vts} \snv^{1}_2 \trans{\{e_1\}, \vts'}  \snv^{2}_2$$
  it holds $\snv^{0}_1 = \snv^{0}_2$ and $\snv^{2}_1 = \snv^{2}_2$.
\end{definition}

We can now provide a formal definition of well-formed \snm\ traces.

\begin{definition}[Well-Formed Trace]
\label{def:well-formed-trace:rt}
   Let \[ \vtrace = \tuple{(\vrtsn_0, \vevts_0, \vts_0),
                           (\vrtsn_1, \vevts_1, \vts_1), \dots,
                           (\vrtsn_\vnat, \vevts_\vnat, \vts_\vnat)} \]
   be a trace. \vtrace{} is \emph{well-formed} if the following conditions hold:
   \begin{enumerate}
      \item
         For any $ \vitera, \viterb $ such that $ 0 \leq \vitera < \viterb \leq \vnat $ it follows that $\vts_\vitera < \vts_\viterb $.
      \item
         For all \viter{} such that $ 0 \leq \viter \leq \vnat - 1 $, it is the case that $ \vrtsn_\viter \trans{\vevs_{i + 1}, \; \vts_{\viter + 1}} \vrtsn_{\viter + 1} $.
      \item
        For all $e_1, e_2 \in E_i$ for $ 0 \leq i \leq k$, $e_1$ is independent from $e_2$.
   \end{enumerate}
\end{definition}

We will use \traceset{} to refer to the set of all well-formed \rtfppf{} traces.
In order to be able to syntactically refer to the previous or next social network model, given a concrete time-stamp, we assume that there exist the functions \emph{predecessor} ($\pre$) and \emph{next} ($\nex$).
$\pre: \timestampset \rightarrow \timestampset$ takes a time-stamp and returns the previous time-stamp in the trace.
Since the set of time-stamps is non-Zeno it is always possible to compute the previous time-stamp.
Analogously, $\nex: \timestampset \rightarrow \timestampset$ takes a time-stamp and returns the next time-stamp in the trace.
In $\langle (\vrtsn_0, E_0, t_0), (\vrtsn_1, E_1, t_1), (\vrtsn_2, E_2, t_2) \rangle$, $\pre(t_1) = t_0$ and $\nex(t_1) = t_2$.
We define predecessor of the initial time-stamp to be equal to itsel, i.e., $\pre(t_0) = t_0$.
Similarly, next of the last time-stamp of the trace is equal to itself, i.e., $\nex(t_2) = t_2$.

\subsubsection{Modelling knowledge}

It is not a coincidence that \rtkbl\ formulae look very similar to those of the language $\langn$ originally defined for epistemic logic~\cite{FHM+95rk}.
We would like to provide users in our system with the same notion of knowledge.
Traditionally, in epistemic logic, the way to model and give semantics to $K_i \phi$ is by means of an undistinguishability relation which connects all world that an agent considers possible~\cite{FHM+95rk}.
In particular, when talking about traces of events the framework used is \emph{Interpreted Systems} (ISs).
In ISs traces are called \emph{runs}.
A run describes the state of the system at any point in (discrete) time.
An IS \intsys~is composed by a set of runs and a undistinguishability relation ($\sim_i$)---for each agent $i$---which models the states of the system that agents consider possible at any point in time.
Determining whether an agent $i$ knows a formula $\phi$ (written in the language of epistemic logic), for a run $r$ at time $m$ is defined as follows:
$$ (\intsys, r, m) \sat K_i \phi \text{ iff } (\intsys, r', m') \sat \phi \text{ for all } (r,m) \sim_i (r',m')$$
where $(r,m)$ and $(r',m')$ represent states of \intsys.

Additionally, Fagin~\etal proposed an alternative encoding  to answer epistemic queries from a knowledge base consisting in a set of accumulated facts~\cite{FHM+95rk}[Section 7.3].
Let $\kbagent$ be an agent representing a knowledge base that has been told the facts $\langle \psi_1, \ldots, \psi_k \rangle$ for $k \geq 1$ in run $r$ at time $t$.
It was shown in \cite{FHM+95rk}[Theorem 7.3.1] that the following are equivalent:
\begin{enumerate}[a)]
  \item $(\intsys^{\kbagent}, r, m) \sat K_{\kbagent} \phi$.
  \item $\mrstn \sat K_{\kbagent} (\psi_1 \wedge \ldots \wedge \psi_k) \implies K_{\kbagent} \phi$.
  \item $(\psi_1 \wedge \ldots \wedge \psi_k) \implies \phi$ is a tautology.
\end{enumerate}

where $\intsys^{\kbagent}$ is an IS which models the behaviour of $\kbagent$ and $\mrstn \sat \phi$ means that $\phi$ is valid in the Kripke models with an accessibility relation that is reflexive ($\mathit{r}$), symmetric ($\mathit{s}$) and transitive ($\mathit{t}$).
The previous theorem holds not only for a system consisting in a single knowledge base, but systems including several knowledge bases.

This way of modelling knowledge is very suitable for our social network models.
As mentioned earlier, in a social network model the users' knowledge is stored in their knowledge base.
Therefore, by using the equivalence in \cite{FHM+95rk}[Theorem 7.3.1] we can determine whether a user knows a formula $\phi$ from the conjunction of all the formulae it has been told, formally,
$ \bigwedge_{\psi \in \kb_i} \psi \implies \phi $.

However, as mentioned earlier, \rtkbl\ is not the same language as \langn.
Therefore we cannot directly apply \cite{FHM+95rk}[Theorem 7.3.1] to determine whether a user knows a fact $\phi$.
In the following we described an extended knowledge base which supports all the components of \rtkbl. 

\subsubsection{Extended Knowledge Bases}

An \emph{Extended Knowledge Base} (\ekb) consists in a collection \rtkbl~formulae without quantifiers.
All domains in a \rtsnm\ are finite at a given point in time---the might grow as events occur.
On the one hand, regular domains, i.e., $D^{\vts}$ in $\struct$ are assumed to be finite.
Therefore, they can be easily unfolded as a finite conjunction.
On the other hand, the time-stamps domain---though infinite in general---for a given trace $\timestampset_\vtrace$ will be finite because traces are finite.
Hence \ekbs\ can be populated with the explicit time-stamps values that moment in time.
Later in this section, we introduce some axioms which will define how time-stamps are handled.
In what follows we introduce the axioms \ekbs\ use to handle knowledge and belief.

\paragraph{Derivations in \ekbs}

The information stored in an agent's \ekbs\ along a trace determines her knowledge.
At a concrete moment in time, an agent's \ekb\ contains the explicit knowledge she just learnt.
New knowledge can be derived from the explicit pieces of information in agents' \ekbs.
Derivations are not limited to formulae of a given point in time, but also can use old knowledge.
A \emph{time window}, or simply, window, determines how much old knowledge is included in a derivation.
We write $\Gamma \der (\phi,\vwindow)$ to denote that $\phi$ can be derived from $\Gamma$ given a window $\vwindow$.
We provide a set of deduction rules, $\dr$, of the form
$$
\inferrule*
  {\Gamma \der (\phi, \vwindow')} 
  {\Gamma \der (\psi, \vwindow)} 
$$
meaning that, given the set of premises $\Gamma$, $\psi$ can be derived with a window $\vwindow$ from $\phi$ in a window $\vwindow'$.

\begin{definition}
  A {\em timed derivation} of a formula $\phi \in \wkbl$ given a window $\vwindow \in \nat$, is a finite sequence of pairs of formulae and windows, $\wkbl \times \nat$, such that $(\phi_1, \vwindow_1), (\phi_2, \vwindow_2), \ldots,$ $(\phi_n, \vwindow_n) = (\phi, \vwindow)$ where each $\phi_i$, for $1 \leq i \leq n$, follows from previous steps by an application of a deduction rule of $\dr$ which premises have already been derived, i.e., it appears as $\phi_j$ with $j < i$,
  and $\vwindow_j \leq \vwindow_i$.

  \label{def:derivation}
\end{definition}

In what follows we present the concrete derivation rules that can be used in \ekbs\ to derive knowledge.
We define deduction rules based on well studied axiomatisations of knowledge and belief together with rules to deal with knowledge propagation.

\paragraph{Knowledge and Belief in \ekbs}
\label{knowledge-belief-axioms}

In \ekbs\ knowledge and belief coexist.
So far, the definition of knowledge that we provided in the previous section only takes into account the axioms for knowledge.
In particular, the axiomatisation \sfiveaxioma.
Thus, \ekbs\ can use any of the \sfiveaxioma\ axioms to derive new knowledge from the conjunction of explicit facts in the knowledge base.

Fagin~\etal provided an axiomatisation for belief~\cite{FHM+95rk}, the \kdfourfiveaxioma\ axiomatisation.
It includes the same set of axioms as \sfiveaxioma~---replacing $\Know{}{i}$ by $\Believe{}{i}$---except for the axiom $\Know{}{i} \phi \implies \phi$ (A3).
The difference between knowledge and belief is that believes do not need to be true---as required by A3 in knowledge.
The requirement is that an agent must have \emph{consistent} beliefs.
It is encoded in the following axiom $\neg \Believe{}{i} \bot$ (axiom D).

We can summarise the last two paragraphs as follows: Whenever a formula of the form $\Know{}{i}\phi$ is encountered, axioms from \sfiveaxioma\ can be applied to derive new knowledge, and if the formula is of the form $\Believe{}{i}\phi$ axioms from \kdfourfiveaxioma\ can be used instead.
Nonetheless, we are missing an important issue: How do knowledge and belief relate to each other?
To answer this question we use two axioms proposed by Halpern~\etal in~\cite{HSS09dktb}:
(L1) $\Know{}{\vagi} \phi \implies \Believe{}{\vagi} \phi$ and
(L2) $\Believe{}{\vagi} \phi \implies \Know{}{\vagi} \Believe{}{\vagi} \phi$.


L1 expresses that when users know a fact they also believe it.
It is sound with respect to the definition of both modalities, since knowledge is required to be true by definition (recall axiom A3).
This axiom provides a way to convert knowledge to belief. 
L2 encodes that when agents believe a fact $\phi$ they know that they believe $\phi$.
Thus adding an axiom which introduces knowledge from belief---more precisely, introduces knowledge about the beliefs.

However, the axioms from \sfiveaxioma, \kdfourfiveaxioma\ and L1, L2 need to be adapted to \rtkbl\ syntax---which is the type of formulae supported by \ekbs.
In particular, the modalities need a time-stamp.
All these axiomatisations are defined for models which represent the system in a concrete time.
That is, given the current set of facts that users have, they can apply the axioms to derive new knowledge (at that time).
To preserve this notion we will simply add the time-stamp $t$ to all modalities.
Intuitively, it models that if users have some knowledge at time $t$ they can derive knowledge using the previous axioms, and, this derived knowledge, belongs to the same time $t$.
Table \ref{tab:ekb-axioms} shows the complete list of axioms that can be applied given a trace $\vtrace$ for each time-stamps $t \in \timestampset_\vtrace$.

\begin{table}[t]
  \begin{center}
    \begin{tabular}{|c|l|c|l|c|l|}
      \hline
      \multicolumn{2}{|c|}{\textbf{Knowledge axioms}} &
      \multicolumn{2}{c|}{\textbf{Belief axioms}} &
      \multicolumn{2}{c|}{\textbf{Knowledge-Belief axioms}} \\
      \hline
      A1 & All tautologies of first-order logic &
      K & $\Believe{\vts}{i} \phi \wedge \Believe{\vts}{i}(\phi \implies \psi) \implies \Believe{\vts}{i} \psi$ &
      L1 & $\Know{\vts}{\vagi} \phi \implies \Believe{\vts}{\vagi} \phi$ \\
      A2 & $\Know{\vts}{i} \phi \wedge \Know{\vts}{i}(\phi \implies \psi) \implies \Know{\vts}{i} \psi$ &
      D & $\neg \Believe{\vts}{i} \bot$ &
      L2 & $\Believe{\vts}{\vagi} \phi \implies \Know{\vts}{\vagi} \Believe{\vts}{\vagi} \phi$ \\
      A3 & $\Know{\vts}{i} \phi \implies \phi$ &
      B4 & $\Believe{\vts}{i} \phi \implies \Believe{\vts}{i} \Believe{\vts}{i} \phi$ &
      & \\
      A4 & $\Know{\vts}{i} \phi \implies \Know{\vts}{i} \Know{\vts}{i} \phi$ &
      B5 & $\neg \Believe{\vts}{i} \phi \implies \Believe{\vts}{i} \neg \Believe{\vts}{i} \phi$ & & \\
      A5 & $\neg \Know{\vts}{i} \phi \implies \Know{\vts}{i} \neg \Know{\vts}{i} \phi$ & & & & \\
      \hline
      \end{tabular}
      \caption{\ekb\ axioms for a trace $\vtrace$ for each $t \in \timestampset_\vtrace$.}

      \label{tab:ekb-axioms}
  \end{center}
\end{table}

In order for these axioms to be used in timed derivations we now express them as deduction rules as shown in Table~\ref{tab:kb-deduction}.
Since all derivations are performed for the same $\vts$ they all share the same window $\vwindow$.

\begin{table*}[t]
  \begin{center}
    \scalebox{0.85}{
    \begin{tabular}{|p{3.8cm} p{3.8cm} p{3.8cm} p{3.8cm}|}
      \hline
      \multicolumn{4}{|c|}{\textbf{Knowledge deduction rules axioms}} \\
      \hline
      &&&\\
      $\inferrule*[Right=(A1)]{\phi \text{ is a first-order tautology}}{\Gamma \der (\phi, \vwindow)}$ &
      \multicolumn{2}{c}{
      $\inferrule*[Right=(A2)]{\Gamma \der (\Know{\vts}{i} \phi, \vwindow) \\ \Gamma \der (\Know{\vts}{i}(\phi \implies \psi), \vwindow)}{\Gamma \der (\Know{\vts}{i} \psi, \vwindow)}$
      } &
      $\inferrule*[Right=(A3)]{\Gamma \der (\Know{\vts}{i} \phi, \vwindow)}{\Gamma \der (\phi, \vwindow)}$ \\
      &&&\\
      \multicolumn{2}{|c}{
        $\inferrule*[Right=(A4)]{\Gamma \der (\Know{\vts}{i} \phi, \vwindow)}{\Gamma \der (\Know{\vts}{i} \Know{\vts}{i} \phi, \vwindow)}$
      } &
      \multicolumn{2}{c|}{
        $\inferrule*[Right=(A5)]{\Gamma \der (\neg \Know{\vts}{i} \phi, \vwindow)}{\Gamma \der (\Know{\vts}{i} \neg \Know{\vts}{i} \phi, \vwindow)}$
      } \\
      &&&\\
      \hline
      \multicolumn{4}{|c|}{\textbf{Belief deduction rules}} \\
      \hline
      &&&\\
      $\inferrule*[Right=(K)]{\Gamma \der (\Believe{\vts}{i} \phi, \vwindow) \\ \Gamma \der (\Believe{\vts}{i}(\phi \implies \psi), \vwindow)}{\Gamma \der (\Believe{\vts}{i} \psi, \vwindow)}$ &
      $\inferrule*[Right=(D)]{~}{\Gamma \der (\neg \Believe{\vts}{i} \bot, \vwindow)}$ &
      $\inferrule*[Right=(B4)]{\Gamma \der (\Believe{\vts}{i} \phi, \vwindow)}{\Gamma \der (\Believe{\vts}{i} \Believe{\vts}{i} \phi, \vwindow)}$ &
      $\inferrule*[Right=(B5)]{\Gamma \der (\neg \Believe{\vts}{i} \phi, \vwindow)}{\Gamma \der (\Believe{\vts}{i} \neg \Believe{\vts}{i} \phi, \vwindow)}$ \\
      &&&\\
      \hline
      \multicolumn{1}{|c|}{\textbf{Premise deduction rule}} & \multicolumn{3}{c|}{\textbf{Knowledge-Belief deduction rules}} \\
      \hline
      \multicolumn{1}{|c|}{$~$}&&&\\
      \multicolumn{1}{|c|}{
      $\inferrule*[Right=(Premise)]{\phi \in \Gamma}{\Gamma \der (\phi, \vwindow)}$
      } &
      \multicolumn{2}{c}{
      $\inferrule*[Right=(L1)]{\Gamma \der (\Know{\vts}{i} \phi, \vwindow)}{\Gamma \der (\Believe{\vts}{i} \phi, \vwindow)}$
      } &
      $\inferrule*[Right=(L2)]{\Gamma \der (\Believe{\vts}{i} \phi, \vwindow)}{\Gamma \der (\Know{\vts}{i} \Believe{\vts}{i} \phi, \vwindow)}$\\
      \multicolumn{1}{|c|}{$~$}&&&\\
      \hline
    \end{tabular}}
      \caption{\ekb\ deduction rules for a trace $\vtrace$ for each $t \in \timestampset_\vtrace$.}
      \label{tab:kb-deduction}
  \end{center}

\end{table*}

As mentioned earlier, an \ekb\ models the knowledge and beliefs of a user at a given moment in time.
We encode this notion by adding an explicit $\Know{\vts}{i}$ to every formula in a user's \ekb.
Formally, we say that users in a trace \vtrace\ are \emph{self-aware} iff for all $\vts \in \timestampset_\vtrace$
$\text{If } \phi \in \vekb{\vtrace[\vts]}{i} \text{ then } \phi = \Know{\vts}{i} \phi'.$
By assuming this property we can syntactically determine the time $\vts$ when some knowledge $\phi$ enters an $\ekb$.
In what follows we assume that all well-formed traces are composed by self-aware agents.

\newcommand*\circled[2]{
  \tikz[baseline=(char.base)]{\node[shape=rectangle, draw, inner sep=2pt, rounded corners, label=below:#2] (char) {#1};}
}

\begin{example}
  \label{ex:deduction-ekb}
  Consider the following \ekb\ from a trace $\vtrace$ of an agent $\vagi$ at time $\vts$.
  \begin{center}
    \circled{
    \begin{tabular}{c}
      $\Know{\vts}{\vagi} (\forall \vts' \cdot \forall \vagj \colon \ag^{\vts'} \cdot \event^{\vts'}(\vagj,\pub) \implies \loc^{\vts'}(\vagj,\pub))$ \\
      $\Know{\vts}{\vagi} \event^{\vts}(\alice,\pub)$
    \end{tabular}}
    {$\vekb{\vtrace[\vts]}{\vagi}$}
  \end{center}

  In this \ekb\ $\vagi$ can derive using the axioms in Table~\ref{tab:ekb-axioms} that Alice's location at time $\vts$ is a pub, i.e., $\loc^{\vts'}(\alice,\pub)$.
  Here we show the steps to derive this piece of information.
  We recall that quantifiers are unfolded when added to the knowledge base.
  Therefore, given $\timestampset_\vtrace = \{\vts_0, \vts_1, \ldots, \vts\}$

  \begin{tabular}{l}
    $\Know{\vts}{\vagi} \forall \vagj \colon \ag^{\vts_0} \cdot \event^{\vts_0}(\vagj,\pub) \implies \loc^{\vts_0}(\vagj,\pub) ~ \wedge $ \\
    $\Know{\vts}{\vagi} \forall \vagj \colon \ag^{\vts_1} \cdot \event^{\vts_1}(\vagj,\pub) \implies \loc^{\vts_1}(\vagj,\pub) ~ \wedge $ \\
    $\ldots$ \\
    $\Know{\vts}{\vagi} \forall \vagj \colon \ag^{\vts} \cdot \event^{\vts}(\vagj,\pub) \implies \loc^{\vts}(\vagj,\pub)$
  \end{tabular}

  where each of these are also syntactic sugar, for instance, given $\ag^{\vts} = \{\vagj_1, \vagj_2, \ldots, \vagj_n\}$ for $n \in \nat$

  \begin{tabular}{l}
    $\Know{\vts}{\vagi} \event^{\vts}(\vagj_1,\pub) \implies \loc^{\vts}(\vagj_1,\pub) ~ \wedge $ \\
    $\Know{\vts}{\vagi} \event^{\vts}(\vagj_2,\pub) \implies \loc^{\vts}(\vagj_2,\pub) ~ \wedge $ \\
    $\ldots$\\
    $\Know{\vts}{\vagi} \event^{\vts}(\vagj_n,\pub) \implies \loc^{\vts}(\vagj_n,\pub) $
  \end{tabular}

  The predicate $\event^{\vts}(\vagj,\pub)$ means that $\vagj$ attended an event at time $\vts$ in a $\pub$.
  The predicate $\loc^{\vts}(\vagj)$ means that $\vagj$'s location is a $\pub$.
  Thus, the implication above encodes that if $\vagi$'s knows at $\vts$ that if an agent is attending an event in a pub at time $\vts$, her location will be a pub.
  Moreover, $\vagi$ knows at time $\vts$ that Alice is attending an event at the pub, $\event^{\vts}(\alice,\pub)$.
  As mentioned earlier, in epistemic logic, knowledge is required to be true.
  Therefore, $\event^{\vts}(\alice,\pub)$ must be a true predicate.
  From this we can infer that $\alice \in \ag^{\vts}$.
  Because of this, $\Know{\vts}{\vagi} \event^{\vts}(\alice,\pub) \implies \loc^{\vts}(\alice,\pub)$ must also be present in $\vekb{\vtrace[\vts]}{\vagi}$.
  Applying A2 to $\Know{\vts}{\vagi} \event^{\vts}(\alice,\pub)$ and the previous implication we can derive $\Know{\vts}{\vagi} \loc^{\vts'}(\vagj,\pub)$ as required.\qed
\end{example}

\paragraph{Handling time-stamps}
\label{handling-timestamps}

In \ekbs\ users can also reason about time.
For instance, if Alice learns Bob's birthday she will remember this piece of information, possibly, forever.
Nonetheless, this is not always true, some information is transient, i.e., it can change over time.
Imagine that Alice shares a post including her location with Bob.
Right after posting, Bob will know Alice's location---assuming she said the truth.
However, after a few hours, Bob will not know for sure whether Alice remains in the same location.
The most he can tell is that Alice was a few hours before in that location or that he believes that Alice's location is the one she shared.
We denote the period of time in which some piece of information remains true as \emph{duration}.

Different pieces of information might have different durations.
For example, someone's birthday never changes, but locations constantly change.
Duration also depends on the OSN.
In Snapchat messages last 10 seconds, in Whatsapp status messages last 24 hours and in Facebook posts remain forever unless a user removes them.
Due to these dependences we do not fix a concrete set of properties regarding time-stamps. 
Instead we keep it open so that they can be added when modelling concrete OSNs in \rtppf.
Concretely, the parameter $\window$ introduced at the beginning of Section~\ref{sec:real-time-rtppf} corresponds to information duration for a particular OSN modelled in \rtppf.

%

Using the window $\vwindow$---from timed derivations, see Def.~\ref{def:derivation}---we define the following deduction rule encoding knowledge propagation.
Given $\vts, \vts' \in \timestampset_\vtrace$ where $\vts < \vts'$:
$$\inferrule*[Right=(KR1)]{\Gamma \der (\Know{\vts}{i} \phi, \vwindow - (\vts' - \vts))}{\Gamma \der (\Know{\vts'}{i} \phi, \vwindow)}$$

The intuition behind $\text{KR1}$ is that $\vwindow$ is consumed every time knowledge is propagated.
Imagine that Alice knows at time 1 the formula $\phi$, $\Know{1}{\alice} \phi$.
Using $\text{KR1}$ in a derivation would allow us to derive, for instance, that she knows $\phi$ at a later time, e.g., at time 5, that is, $\Know{5}{\alice} \phi$.
Note that this derivation requires $\vwindow$ to be at least 4, since Alice's knowledge of $\phi$ is propagated 4 units of time.
As usual in when using this type rules, derivations can be described forwards or backwards.
In the latter, the derivation starts with the conclusion of the rule---that is, we want to derive---and reduce the value of $\vwindow$ every time we access old knowledge.
In the former, we start from the premise of the rule and we increase $\vwindow$ accordingly to derive the conclusion.
The intuition and ways of deriving knowledge using $\text{KR1}$ are better illustrated with an example.

\begin{example}
  Consider the sequence of \ekbs\ in Fig.~\ref{fig:seq-ekbs-2-knowledge-progation} of an agent $\vagi$ from a trace $\vtrace$ where $\timestampset_\vtrace = \{0, \ldots, 4\}$.
  In this example we show the purpose of the window $\vwindow$ when making derivations.
  Note that it is not possible to derive Alice's location only from the set of facts in a single knowledge base at a time $\vts$.
  Instead it is required to combine knowledge from different knowledge bases.
  We use the knowledge recall rule with different windows to access previous knowledge.
  As mentioned earlier, intuitively, $\vwindow$ determines for how long agents remember information.
  Therefore, it is required to find an appropriate value for $\vwindow$ that includes the sufficient knowledge---from all moments in time---to perform the derivation.
  In the figure, the red square marks the accessible knowledge for $\vwindow=2$ and the blue square for $\vwindow=3$.

  As can be seen in the trace, in order for $\vagi$ to derive $\event^3(\alice,\pub)$ she needs to combine knowledge from \vekb{\vtrace[0]}{\vagi} and \vekb{\vtrace[3]}{\vagi}.
  Let $\vekb{\vtrace}{\vagi} = \bigcup_{\vts \in \timestampset_\vtrace} \vekb{\vtrace[\vts]}{\vagi}$.
  First, we show how to construct a proof forwards, i.e., starting from the premises and a window of 0, move forward---by increasing the size of $\vwindow$---until the inference can be performed.
  In particular, we show that $\vekb{\vtrace}{\vagi} \der (\Know{3}{\loc^{3}(\alice,\pub)}, \vwindow)$ for $\vwindow \in \nat$.
  Let us start by applying the rule $\premise$ with $\vwindow = 0$,
  $$\vekb{\vtrace}{i} \der (\Know{0}{i} \event^{3}(\alice,\pub) \implies \loc^{3}(\alice,\pub),0)$$
  Recall that the quantifiers in the example are just syntactic sugar.
  They are replaced when added to the \ekb.
  Now we use $\text{KR1}$ to combine this knowledge with knowledge at time $3$.
  In other words, we propagate knowledge from time 0 (\Know{0}{\vagi}) to time 3 (\Know{3}{\vagi}).
  \begin{center}
    \begin{tabular}{c}
    $
    \inferrule*[Left=(KR1)]
    {\vekb{\vtrace}{i} \der (\Know{0}{i} \event^{3}(\alice,\pub) \implies \loc^{3}(\alice,\pub),0)}
    {\vekb{\vtrace}{i} \der (\Know{3}{i} \event^{3}(\alice,\pub) \implies \loc^{3}(\alice,\pub),3)}
    $
    \end{tabular}
  \end{center}
  As stated in the rule the window has been incresed by 3, since $0 = 3 - (3 - 0)$.
  We apply \premise\ again to obtain $(\vekb{\vtrace}{i} \der \Know{3}{i} \event^{3}(\alice,\pub),3)$.
  As in Example~\ref{ex:deduction-ekb} by applying A2 in the previous statements we derive $(\vekb{\vtrace}{i} \der \Know{3}{i} \loc^{3}(\alice,\pub),3)$.
  This proof shows that $i$ knows Alices location provided that agents remember information during 3 units of time.

  We show now that a window smaller than 3 makes this derivation impossible.
  Also we construct the proof backwards, i.e., starting from the conclusion we try to prove the required premises.
  Let us take a largest window smaller than 3, $\vwindow  = 2$.
  We try to show that $\vekb{\vtrace}{\vagi} \der (\Know{3}{\vagi} \loc^{3}(\alice,\pub),2)$.
  In order to prove we need to show the following:
  \begin{center}
  \begin{tabular}{c}
  $
  \inferrule*[Left=(A2)]
  {(\vekb{\vtrace}{i} \der \Know{3}{i} \event^{3}(\alice,\pub),2) \\ \vekb{\vtrace}{i} \der (\Know{3}{i} \event^{3}(\alice,\pub) \implies \loc^{3}(\alice,\pub),2)}
  {\vekb{\vtrace}{i} \der (\Know{3}{i} \loc^{3}(\alice,\pub),2)}
  $
  \end{tabular}
  \end{center}
  The first premise, $(\vekb{\vtrace}{i} \der \Know{3}{i} \event^{3}(\alice,\pub),2)$, trivally follows by \premise.
  For the second premise we first try move one step back using KR1, i.e,
  $$\vekb{\vtrace}{i} \der (\Know{2}{i} \event^{3}(\alice,\pub) \implies \loc^{3}(\alice,\pub),1),$$
  since there is no knowledge at time $2$, the previous statement cannot be proven.
  We apply again KR1 obtaining
  $$\vekb{\vtrace}{i} \der (\Know{1}{i} \event^{3}(\alice,\pub) \implies \loc^{3}(\alice,\pub),0).$$
  Similarly, this statement cannot be proven.
  Note that the window is $0$.
  Intuitively, it means that we have access all knowledge that $\vagi$ remembers.
  Therefore, we have reach a dead end in our proof tree.
  KR1 could be applied again, which would give
  $$\vekb{\vtrace}{i} \der (\Know{1}{i} \event^{3}(\alice,\pub) \implies \loc^{3}(\alice,\pub),-1).$$
  However, it would not be a valid timed derivation.
  In Def.~\ref{def:derivation} we require $\vwindow \in \nat$.
  Finally, we conclude that $\Know{3}{i} \loc^{3}(\alice,\pub)$ cannot be derived with a window of $2$.\qed
\end{example}

\begin{figure*}[!t]
  \centering
  \scalebox{0.8}{
  \footnotesize
  \begin{tikzpicture}[>=stealth',every text node part/.style={align=center},on grid,auto, minimum height=1ex,semithick, label distance=0.2cm]
    \node (ekb0) [rectangle, rounded corners, draw, minimum width=30pt, label=below:$\hspace{0.2cm}\vekb{\vtrace[0]}{\vagi}$]
                 {$\Know{0}{\vagi} \forall \vts' \cdot \forall \vagj \colon \ag^{\vts'} \cdot \event^{\vts'}(\vagj,\pub) \implies \loc^{\vts'}(\vagj,\pub)$};
    \node (ekb1) [rectangle, rounded corners, draw, minimum width=30pt, label=below:$\vekb{\vtrace[1]}{\vagi}$][right = 4.5cm of ekb0]
                 {$\emptyset$};
    \node (ekb2) [rectangle, rounded corners, draw, minimum width=30pt, label=below:$\vekb{\vtrace[2]}{\vagi}$][right = 1.4cm of ekb1]
                 {$\emptyset$};
    \node (ekb3) [rectangle, rounded corners, draw, minimum width=30pt, label=below:$\vekb{\vtrace[3]}{\vagi}$][right = 2.4cm of ekb2]
                 {$\Know{3}{\vagi} \event^{3}(\alice,\pub)$};
    \node (ekb4) [rectangle, rounded corners, draw, minimum width=30pt, label=below:$\vekb{\vtrace[4]}{\vagi}$][right = 2.5cm of ekb3]
                 {$\emptyset$};

    \node (w0) [draw=red, thick, fit= (ekb1) (ekb3), label=right:$~$] {};
    \node (w1) [draw=blue, thick, fit= (ekb0) (w0) , label=right:$~$] {};

    \path[->] [] (ekb0) edge node {} (ekb1);
    \path[->] [] (ekb1) edge node {} (ekb2);
    \path[->] [] (ekb2) edge node {} (ekb3);
    \path[->] [] (ekb3) edge node {} (ekb4);
  \end{tikzpicture}
  } 

  \caption{Sequence of \ekbs\ of agent $\vagi$ for a trace $\vtrace$ where $\timestampset_\vtrace = \{0, \ldots, 4\}$}
  \label{fig:seq-ekbs-2-knowledge-progation}
\end{figure*}

\paragraph{Belief propagation}
Beliefs cannot be propagated as easily as knowledge because new beliefs may
contradict current knowledge or beliefs of an agent.
Instead of using timed derivations, we model agents that try to propagate
beliefs if these beliefs are consistent, and discard them otherwise.
We describe two kinds of agents: \emph{conservative} and
\emph{susceptible}, but other criteria for choosing between
incompatible beliefs are possible.
We use the parameter $\agentype$ in the framework to denote the kind of agent.
Conservative agents reject any new belief that contradicts their current set of
beliefs, while susceptible agents always accept new beliefs that replace old
believes if necessary to guarantee a consistent set of beliefs.
Here we present a belief propagation algorithm which describes how agents behave
when faced with a new belief.


Consider a trace $\vtrace$ with $\timestampset_\vtrace = \{\vts_0, \ldots,
\vts_{n-1}, \vts_n\}$.
We use the following notation $\vekb{\vtrace[\vts_j, \vts_k]}{\vagi} =
\bigcup_{\vts \in \{t_j, \ldots, \vts_k\}}
\vekb{\vtrace[\vts]}{\vagi}$.
Also, we introduce the event $\enter(\Believe{\vts}{\vagi} \phi)$ meaning that
\emph{belief $\phi$ enters $i$'s knowledge base at time $\vts$}.
The moment at which this event occurs identifies the moment when a belief is
inserted in an agent's knowledge base, which is crucial to propagate beliefs.
Given a belief $\Believe{\vts_n}{\vagi} \phi$ that is about to enter
$\vekb{\vtrace[\vts_n]}{\vagi}$, i.e., $\snv_{t_{n-1}}
\xrightarrow{\mathsf{enter}(\Believe{\vts_n}{\vagi} \phi),\vts_n} \snv_{t_n}$,
Algorithm~\ref{alg:belief-propagation} propagates the accumulated set of beliefs
as long as they are inside the window $\window$, and resolves conflicts
according to $\agentype$.

Lines 2-3 of Algorithm~\ref{alg:belief-propagation} construct a set $\Psi$ of
candidate beliefs to be propagated---according to \window---together with the
new belief that tries to enter $i$'s \ekb.
The if block (lines 4-6) sorts $\Psi$ according to $\agentype$.
In the foreach block (lines 7-11), we iterate over the sorted list of beliefs
and add them to $\vekb{\vtrace[\vts_n]}{\vagi}$ if they are consistent with the
rest of knowledge and beliefs.
It is easy to see that traversing beliefs from newest to oldest gives
preference to newer beliefs in entering
$\vekb{\vtrace[\vts_n]}{\vagi}$.
In particular, $\Believe{\vts_n}{\vagi} \phi$---the newest
belief---will always enter the $\vekb{\vtrace[\vts_n]}{\vagi}$ unless
this belief contradicts actual knowledge, which corresponds to
susceptible agents.
On the contrary, when sorting from oldest to newest, the older beliefs
will have preference to enter $\vekb{\vtrace[\vts_n]}{\vagi}$, thus,
preventing new inconsistent beliefs to enter
$\vekb{\vtrace[\vts_n]}{\vagi}$, as required for conservative agents.
In particular, $\Believe{\vts_n}{\vagi} \phi$ will not be added to
$\vekb{\vtrace[\vts_n]}{\vagi}$ unless it is consistent with all the previous
beliefs and knowledge.
Finally, we always include the predicate
$\occur{\vts_n}(\enter(\Believe{\vts_n}{\vagi} \phi))$ (line 12) so that the
agent remembers that she was told $\Believe{\vts_n}{\vagi}
\phi$ ---independently on whether she started to believe it.
Note that consistency of $\vekb{\vtrace[\vts_n]}{\vagi}$ in both cases is
directly guaranteed by the inclusion condition in line 8.

\algnewcommand\algorithmicswitch{\textbf{switch}}
\algnewcommand\algorithmiccase{\textbf{case}}
\algnewcommand\algorithmicforeach{\textbf{foreach}}

\algdef{SE}[SWITCH]{Switch}{EndSwitch}[1]{\algorithmicswitch\ #1\ \algorithmicdo}{\algorithmicend\ \algorithmicswitch}%
\algdef{SE}[CASE]{Case}{EndCase}[1]{\algorithmiccase\ #1}{\algorithmicend\ \algorithmiccase}%
\algdef{SE}[FOREACH]{ForEach}{EndForEach}[1]{\algorithmicforeach\ #1\ \algorithmicdo}{\algorithmicend\ \algorithmicforeach}%

\algtext*{EndCase} 

\newcommand{\BeliefAlgWithSwitch}{
\begin{algorithm}
  \caption{Belief propagation}\label{alg:belief-propagation}
  \begin{algorithmic}[1]
    \Procedure{Belief-Propagation}{$\vekb{\vtrace[\vts_n]}{\vagi}$, $\Believe{\vts_n}{\vagi} \phi$, \window, \agentype}
      \State $\Psi \gets \{\Know{\vts_n}{\vagi} \Believe{\vts_n}{\vagi} \psi | \occur{t}(\enter(\Believe{\vts}{\vagi} \phi)) \in \vekb{\vtrace[t_n - \window, t_n]}{\vagi}$ where $ t \in [\vts_n - \window, \vts_n]\}$
      \State $\Psi \gets \Psi \cup \{\Know{\vts_n}{\vagi}\Believe{\vts_n}{\vagi} \phi\}$
      \Switch{$\agentype$}
        \Case{$\susceptible$}
          \State $[b_0, b_1, \ldots, b_n] \gets \mathit{sortNewestOldest}(\Psi)$ 
        \EndCase
        \Case{$\conservative$}
          \State $[b_0, b_1, \ldots, b_n] \gets \mathit{sortOldestNewest}(\Psi)$ 
        \EndCase
      \EndSwitch
      \ForEach{$b$ \textbf{in} $[b_0, b_1, \ldots, b_n]$}
        \If{$\vekb{\vtrace[\vts_0, \vts_{n}]}{\vagi} \cup \{b\} ~ \cancel{\der} ~ \Believe{\vts_n}{\vagi} \bot $}
          \State $\vekb{\vtrace[\vts_n]}{\vagi} \gets \vekb{\vtrace[\vts_n]}{\vagi} \cup \{b\}$
        \EndIf
      \EndForEach
      \State $\vekb{\vtrace[\vts_n]}{\vagi} \gets \vekb{\vtrace[\vts_n]}{\vagi} \cup \{\occur{\vts_n}(\enter(\Believe{\vts_n}{\vagi} \phi))\}$
      \State \textbf{return} $\vekb{\vtrace[\vts_n]}{\vagi}$
    \EndProcedure
  \end{algorithmic}
\end{algorithm}
}

\newcommand{\BeliefAlgWithIf}{
\begin{algorithm}
  \caption{Belief propagation}\label{alg:belief-propagation}
  \begin{algorithmic}[1]
    \Procedure{Belief-Propagation}{$\vekb{\vtrace[\vts_n]}{\vagi}$, $\Believe{\vts_n}{\vagi} \phi$, \window, \agentype}
      \State $\Psi \gets \{\Know{\vts_n}{\vagi} \Believe{\vts_n}{\vagi} \psi | \occur{t}(\enter(\Believe{\vts}{\vagi} \phi)) \in \vekb{\vtrace[t_n - \window, t_n]}{\vagi}$ where $ t \in [\vts_n - \window, \vts_n]\}$
      \State $\Psi \gets \Psi \cup \{\Know{\vts_n}{\vagi}\Believe{\vts_n}{\vagi} \phi\}$
      \If{\hspace{1.83em}$\agentype=\susceptible$\hspace{0.7em}}  $[b_0, b_1, \ldots, b_n] \gets \mathit{sortNewestOldest}(\Psi)$ 
        \ElsIf{$\agentype=\conservative$}
             $[b_0, b_1, \ldots, b_n] \gets \mathit{sortOldestNewest}(\Psi)$ 
        \EndIf
      \ForEach{$b$ \textbf{in} $[b_0, b_1, \ldots, b_n]$}
        \If{$\vekb{\vtrace[\vts_0, \vts_{n}]}{\vagi} \cup \{b\} ~ \cancel{\der} ~ \Believe{\vts_n}{\vagi} \bot $}
          \State $\vekb{\vtrace[\vts_n]}{\vagi} \gets \vekb{\vtrace[\vts_n]}{\vagi} \cup \{b\}$
        \EndIf
      \EndForEach
      \State $\vekb{\vtrace[\vts_n]}{\vagi} \gets \vekb{\vtrace[\vts_n]}{\vagi} \cup \{\occur{\vts_n}(\enter(\Believe{\vts_n}{\vagi} \phi))\}$
      \State \textbf{return} $\vekb{\vtrace[\vts_n]}{\vagi}$
    \EndProcedure
  \end{algorithmic}
\end{algorithm}
}

\BeliefAlgWithIf

\begin{example}
  At $\eightpm$ Alice sends a message to Bob indicating that she is at
  work, so $\vekb{\vtrace[\eightpm]}{\bob}$ contains
  $\occur{\eightpm}(\enter(\Believe{\eightpm}{\bob} \loc^{\eightpm}(\alice, \work)))$
and also
  $\Know{\eightpm}{\bob} \Believe{\eightpm}{\bob}  \loc^{\eightpm}(\alice, \work)$.
%
  At $\tenpm$ Bob checks his Facebook timeline, and he sees a post of
  Charlie---who is a coworker of Alice---at $\eightpm$ saying that he
  is with all his coworkers in a pub having a beer.  Assuming that at
  $\tenpm$ Bob still remembers his belief from $\eightpm$ this new
  information creates a conflict with Bob's beliefs.  Note that
  information from Charlie's post is also taken as a belief since
  there is no way for Bob to validate it.
If Bob is a conservative agent, then
%
$\vekb{\vtrace[\tenpm]}{\bob} =  \{\Know{\tenpm}{\bob} \Believe{\tenpm}{\bob}
\loc^{\eightpm}(\alice, \work)\} ~ \cup \quad
\{\occur{\tenpm}(\enter(\Believe{\tenpm}{\bob} \loc^{\eightpm}(\alice,
\pub)))\},$ meaning that the new belief is rejected.
  If Bob is a susceptible agent:
$\vekb{\vtrace[\tenpm]}{\bob} = \{\Know{\tenpm}{\bob} \Believe{\tenpm}{\bob} \loc^{\eightpm}(\alice, \pub)\}$ $\cup\ \{\occur{\tenpm}(\enter(\Believe{\tenpm}{\bob} \loc^{\eightpm}(\alice, \pub)))\}.$
Bob believes that Alice's location at time $t$ ($\eightpm \leq \vts < \tenpm$)
is work---due to belief propagation.
After $\tenpm$, this belief does not propagate to avoid contradictions with the
new belief $\Believe{\tenpm}{\bob} \loc^{\eightpm}(\alice, \pub)$.\qed
  \label{ex:belief-propagation}
\end{example}

\subsubsection{Semantics of \rtkbl{} (RTKBL)}
\label{sec:semantics-of-rtkbl:rt}

The semantics of \rtkbl\ formulae is given by the following satisfaction relation $\vDash$.

\begin{definition}[Satisfaction Relation]
\label{def:satisfaction-relation:rt}
   Given
   a well-formed trace $ \vtrace \in \traceset $,
   agents $ \vagi, \vagj \in \ag $,
   a finite set of agents $ \vgr \subseteq \ag $,
   formulae $ \vfa, \vfb \in \frtkbl $,
   $ m \in \conSet $,
   $ n \in \actSet $,
   $ o \in \domSet $,
   a variable $ x $,
   an event $ \vevt \in \evtSet $,
   and a time-stamp \vts{},
   the \emph{satisfaction relation $ \sat{} \subseteq \traceset{} \times
   \frtkbl{}$} is defined as shown in Fig.
   \ref{fig:satisfaction-relation:rt}.
\end{definition}

\begin{figure}[!t]
\begin{center}
  \begin{tabular}{lcl}
    $ \vtrace \sat \occur{\vts}(\vevt) $ &
    iff &
    $ (\vrtsn, \vevts, \vts) \in \vtrace $ such that $ \vevt \in \vevts $ \\
    $ \vtrace \sat \neg \vfa $ &
    iff &
    $ \vtrace \; \cancel{\sat} \; \vfa $ \\
    $ \vtrace \sat \vfa \land \vfb $ &
    iff &
    $ \vtrace \sat \vfa $ and $ \vtrace \sat \vfb $ \\
    $ \vtrace \sat \forall \vts \cdot \vfa $ &
    iff &
    for all $ v \in \timestampset_\vtrace $, $ \vtrace \sat \vfa[v / \vts]
    $ \\
    $ \vtrace \sat \forall x : D^{\vts} \cdot \vfa $ &
    iff &
    for all $ v \in D_o^{\vtrace[\vts]} $, $ \vtrace \sat \vfa[v / x]
    $ \\
    $ \vtrace \sat \connectionPredicate{\vts}{m}(\vagi, \vagj) $ &
    iff &
    $ (\vagi, \vagj) \in C_m^{\vtrace[\vts]} $ \\
    $ \vtrace \sat \actionPredicate{\vts}{n}(\vagi, \vagj) $ &
    iff &
    $ (\vagi, \vagj) \in A_n^{\vtrace[\vts]} $ \\
    $ \vtrace \sat \predicate{\vts}{\vterm} $ &
    iff &
    $ \predicate{\vts}{\vterm} \in \kb^{\vtrace[\vts]}_e$\\
    $ \vtrace \sat \Know{\vts}{\vagi} \vfa $ &
    iff &
    $ \bigcup_{\{\vts' \mid \vts' < \vts, \vts' \in \timestampset_\vtrace\}} \kb^{\vtrace[\vts']}_\vag \der (\vfa, \window) $ \\
    $ \vtrace \sat \Believe{\vts}{\vagi} \vfa $ &
    iff &
    $ \bigcup_{\{\vts' \mid \vts' < \vts, \vts' \in \timestampset_\vtrace\}} \kb^{\vtrace[\vts']}_\vag \der (\Believe{\vts}{\vagi} \vfa, \window) $ \\
  \end{tabular}
\end{center}
\caption{Satisfaction relation for \rtkbl}
\label{fig:satisfaction-relation:rt}
\end{figure}

Predicates of type $\occur{t}(e)$ are true if the event $e$ is part of the events that occurred at time $t$ in the trace.
$\forall \vts$ quantifies over all the time-stamps in the trace $\timestampset_\vtrace$, which, as mentioned earlier, is a finite set.
For the remaining domains, $\forall x : D^\vts$, the substitution is carried out over the elements of the domain at a concrete time $t$.
Remember that each individual domian $D^\vts$ always contains a finite set of elements.
However, the same domain at different points in time, e.g., $D^\vts$ and $D^{\vts'}$, for any $\vts \not = \vts'$ might contain different number of elements.
When checking connections $\connectionPredicate{\vts}{m}(i,j)$ and actions $\actionPredicate{\vts}{n}(i,j)$ at time \vts{}, we check whether the corresponding relation--- $C_m^{\vtrace[\vts]}$ and $A_n^{\vtrace[\vts]}$, correspondingly---of the \rtsnm\ at time \vts\ contains the pair of users in question.
Checking whether a predicate of type $\predicate{\vts}{\vv{s}}$ holds is equivalent to looking into the knowledge base of the environment at time \vts.
The environment's knowledge base contains all predicates that are true in the real world at a given moment in time.
For example, ``it is raining in Gothenburg at 19:00'' $\mathit{rain}^{19:00}(\mathit{Gothenburg})$ or ``Alice's location at 20:00 is Madrid'' $\loc^{20:00}(\alice,\mathit{Madrid})$.
Determining whether an agent \vagi\ knows \vf\ at time \vts{}, $\Know{\vts}{\vagi} \vfa$, translates to checking whether \vf~can be derived from \vagi's knowledge base at time \vts.
In order to determine whether a user believes $\phi$ at time $\vts$, $\Believe{\vts}{\vagi} \vfa$, we check whether $\Believe{\vts}{\vagi} \vfa$ is derivable from the knowledge base of the user at time $t$ given the parameter $\window$ of the framework.
This way of defining belief is based on the fact that agents are aware of their beliefs, recall axiom (L2) in Table~\ref{tab:ekb-axioms}.
Therefore if a user believes $\phi$, i.e., $\Believe{\vts}{\vagi} \vfa$, then she must also know that she believes it $\Know{\vts}{\vagi} \Believe{\vts}{\vagi} \vfa$.
Intuitively, $\kb^{\vtrace[\vts]}_\vag \der \phi$ means that ``user $\vag$ knows $\phi$ at time $\vts$'', given its equivalence to the knowledge operator.
As expected, knowledge derivations are also limited by the parameter $\window$.

The intuition behind the previous definitions is better illustrated in the
following example.

\begin{figure*}[!t]
  \centering
  \scalebox{0.7}{
  \footnotesize
  \begin{tikzpicture}[>=stealth',every text node part/.style={align=center},on grid,auto, minimum height=8ex,semithick]

    \node (alice0) [rectangle, rounded corners, draw, minimum width=120pt, label=left:Alice]{$\Know{0}{\alice} \pic^{0}(\bob, \pub)$ \\
                                                                                             $\Know{0}{\alice} \Believe{0}{\alice} \loc^{0}(\bob, \pub)$};
    \node (bob0) [rectangle, rounded corners, draw,minimum width=120pt,  label=left:Bob][below = 2cm of alice0]{};
    \node (charlie0) [rectangle, rounded corners, draw,minimum width=120pt, label=left:Charlie][above = 2cm of alice0]{};

    \node (sn_0) [fit= (alice0) (bob0) (charlie0), label=above:$\snv_0$] {};

    \node (alice1) [rectangle, rounded corners, draw, minimum width=120pt, label=right:Alice][right = 10cm of alice0]{$\Know{7}{\alice} \friendRequest^{7}(\alice,\charlie)$};
    \node (bob1) [rectangle, rounded corners, draw,minimum width=120pt,label=right:Bob][below = 2cm of alice1]{};
    \node (charlie1) [rectangle, rounded corners, draw,minimum width=120pt,label=right:Charlie][above = 2cm of alice1]{$\Know{7}{\charlie} \friendRequest^{7}(\alice,\charlie)$};

    \node (sn_1) [fit= (alice1) (bob1) (charlie1), label=above:$\snv_7$] {};

    \node (alice2) [rectangle, rounded corners, draw, minimum width=120pt, label=left:Alice][below = 5cm of alice1, xshift=-5cm]{$\Know{15}{\alice} \pic^{15}(\bob, \work)$ \\
    $\Know{15}{\alice} \Believe{15}{\alice} \loc^{15}(\bob, \work)$};
    \node (bob2) [rectangle, rounded corners, draw,minimum width=120pt,label=left:Bob][below = 2cm of alice2]{$\Know{15}{\bob} \pic^{15}(\bob, \work)$ \\
                                                                                                                          $\Know{15}{\bob} \loc^{15}(\bob, \work)$};
    \node (charlie2) [rectangle, rounded corners, draw,minimum width=120pt,label=left:Charlie][above = 2cm of alice2]{};

    \node (sn_2) [fit= (alice2) (bob2) (charlie2), label=above:$\snv_{15}$] {};

    \path[<->] [] (alice0) edge node [] {$\Friendship$} (bob0);
    \path[<-] [dashed] (charlie0) edge node [] {$\friendRequest$} (alice0);

    \path[<->] [] (alice1) edge node [] {$\Friendship$} (bob1);
    \path[<-] [dashed] (charlie1) edge node [] {$\friendRequest$} (alice1);

    \path[<->] [] (alice2) edge node [] {$\Friendship$} (bob2);
    \path[<->] [] (charlie2) edge node [] {$\Friendship$} (alice2);

    \draw[->] [-triangle 45] (sn_0) |- node [xshift=2.7cm]{$\{\friendRequest(\alice,\charlie)\}, 7$} (sn_1);
    \draw[->] [-triangle 45] (sn_1) |- node {$\{\acceptFollowReq(\alice,\charlie),$\\$ \share(\pic, \bob, \work)\}, 15$} (sn_2);
  \end{tikzpicture}}
  \caption{\label{fig:snapchat-example} Example of a Snapchat trace}
\end{figure*}

\begin{example}[Snapchat]
  In this example we model the OSN Snapchat.
  In Snapchat there are two main events that users can perform:
  \begin{inparaenum}[i)]
    \item Connect through a friend relation;
    \item share timed messages (which last up to 10 seconds and can include text and/or a picture)  with their friends.
  \end{inparaenum}
  Fig.~\ref{fig:snapchat-example} shows an example trace for Snapchat.
  The trace consists of:
  A common set of agents $\ag = \{\alice, \bob, \charlie\}$.
  Since \ag\ does not change we avoid using the superindex indicating the time-stamp of the domain.
  Three \snms\ $\snv_0$, $\snv_7$ and $\snv_{15}$.
  The subindex of the \snms\ indicates their time-stamp.

  At time $0$, Alice and Bob are friends, i.e., $\friendship^{0}(\alice,\bob)$.
  This is represented by including the pair $(\alice, \bob)$ in the relation $\Friendship^{\vtrace[0]}$, drawn in the picture as an arrow between Alice and Bob in $\snv_0$.
  This relation between Alice and Bob does not change in \vtrace.
  Moreover, in $\snv_0$, Alice can send a friend request to Charlie.
  It is depicted as an outgoing dashed arrow from Alice's node to Charlie's.
  Thus, $\vtrace \sat \Permitted{\charlie}{\alice}\friendRequest^{0}$ holds.
  Moreover, Alice knows that there is a picture of Bob at the pub, $\pic^{0}(\bob,\pub)$.
  On the other hand, she \emph{believes} that his location is the pub, $\loc^{0}(\bob,\pub)$.
  The reason for this is because she cannot verify that the picture has not been modified or she cannot precisely identify the location.
  However, the existence of $\pic^{0}(\bob,\pub)$ can be verified since it is a picture that Alice can see in the OSN.

  At time $7$, Alice sends a friend request to Charlie.
  Though in this paper we do not discuss modelling the behaviour of events, we assume for this example that Alice can perform this events since he is explicitly permitted.\footnote{We refer the reader to~\cite{Pardo2017} for the definition of operational semantics rules for \ppf\ which describe the behaviour of the events in the OSN.}
  After the execution of the event both agents know $\friendRequest^{7}(\alice,\charlie)$.
  Note that this event produces knowledge.
  This is because the agents can verify that the friend request has occurred.

  Lastly, at time $15$, Charlie accepts Alice's request and Bob shares a picture at work.
  Note that these two events are independent.
  If they were to be executed sequentially, independently of their order, the final \snm\ would always be equal to $\snv_{15}$.
  After Bob's accepting Alice's request $(\alice, \charlie) \not \in \FriendRequest^{\vtrace[15]}$, and $(\alice, \charlie) \in \Friendship^{\vtrace[15]}$.
  That is, Alice cannot send more friend requests to Charlie, and now they have become friends.
  Furthermore, both, Alice and Bob know that Bob shared a picture at work.
  Note that, in this case, Bob also knows that his location is work.
  Nevertheless, Alice believes it.\footnote{For readability we omit $\occur{15}(\enter(\Believe{15}{\alice}\loc^{15}(\bob,\work)))$ in Fig.~\ref{fig:snapchat-example} which is included in $\vekb{\vtrace[15]}{\alice}$.}
  The reason is that, unlikely Bob, Alice cannot confirm that Bob's location is work.

  We mentioned that Snapchat messages last for up to 10 seconds.
  Let us assume w.l.o.g.~that all messages last 10 seconds, i.e., $\window = 10$.
  Given that, in \vtrace, Alice remembers Bob picture from 0 to 10.
  That is,
  $$ \vtrace \sat \forall \vts \cdot 0 \leq \vts \leq 10 \implies \Know{\vts}{\alice} \pic^{0}(\bob,\pub)$$
  Similarly, her belief about Bob location, $\pic^{0}(\bob,\pub)$, vanishes at time 10.
  Note also that, when Charlie accepts Alice's friend request, he still knows (or remembers) that Alice sent it.
  In Snapchat friend request are permanent, but in \rtppf\ we can choose whether friend request disappear after a few seconds.
  It can be done by requiring that the agent knows that a friend request occurred in order to accept it.
  In such a case, in \vtrace, after time $18$ Charlie would not be able to accept Alice's request.

  As mentioned earlier, the purpose of \rtppf\ main goal is expressing privacy policies.
  They can be expressed as a formulae in \rtkbl.
  Let $\ag$ and $\mathit{Locs}$ be domains of agents and locations, respectively.
  For the purpose of this example we assume that they do not change and that is why we do not specify the superindex.
  Alice, who likes keep her weekends private can write the following policy
  $\forall \vagi \colon \ag \cdot \forall l \colon \mathit{Locs} \cdot \forall \vts \cdot \weekend(\vts) \implies (\forall \vts' \cdot \vts' > \vts \implies \Know{\vts'}{\vagi} \pic^{\vts}(\alice,l)).$
  In English it means ``Nobody can know Alice location during the weekend''.
\end{example}


\subsection{Model Checking \rtkbl}
In this section, we show that the model checking problem for \rtkbl\
is decidable.

\begin{theorem}
  The model checking problem for \rtkbl\ is decidable.
  \label{thm:model-checking}
\end{theorem}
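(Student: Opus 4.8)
The plan is to exhibit an effective procedure that decides $\vtrace \sat \phi$ for an arbitrary well-formed trace $\vtrace \in \traceset$ and formula $\phi \in \frtkbl$, by recursion on the structure of $\phi$ following the clauses of the satisfaction relation in Fig.~\ref{fig:satisfaction-relation:rt}. The single observation that makes the whole thing work is that \emph{everything relevant is finite}: since $\vtrace$ is a finite sequence, the time-stamp set $\timestampset_\vtrace$ is finite; each first-order domain $D_o^{\vtrace[\vts]}$ is finite at every instant; and each agent's \ekb\ $\kb^{\vtrace[\vts]}_\vag$ is a finite set of quantifier-free formulae. I would establish this finiteness first, since it is what tames every case of the semantics.

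I would then dispatch the non-epistemic cases directly. The predicate clauses for $\occur{\vts}(\vevt)$, $\connectionPredicate{\vts}{m}(\vagi,\vagj)$, $\actionPredicate{\vts}{n}(\vagi,\vagj)$ and $\predicate{\vts}(\vterm)$ each reduce to a membership test in a finite set (the events at an instant, a finite relation of the \rtsnm, or the environment's finite \ekb), hence are decidable; the Boolean connectives reduce to their subformulae; and both quantifiers range over finite sets --- $\forall \vts$ over $\timestampset_\vtrace$ and $\forall x : D^{\vts}$ over $D_o^{\vtrace[\vts]}$ --- so each unfolds to a finite conjunction decided recursively. The heart of the argument is the two epistemic clauses, which reduce $\vtrace \sat \Know{\vts}{\vagi}\phi$ and $\vtrace \sat \Believe{\vts}{\vagi}\phi$ to the derivability judgements $\Gamma \der (\phi,\window)$ and $\Gamma \der (\Believe{\vts}{\vagi}\phi,\window)$, where $\Gamma = \bigcup_{\vts' < \vts} \kb^{\vtrace[\vts']}_\vag$ is finite. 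So it suffices to prove that $\Gamma \der (\cdot,\window)$ is decidable.

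For decidability of derivability I would argue in two stages. The temporal content of any derivation is carried entirely by KR1, and each application strictly decreases the window by $\vts' - \vts > 0$ while the window must remain in $\nat$ (Def.~\ref{def:derivation}); since the instants in $\timestampset_\vtrace$ are strictly increasing, only finitely many lie within window reach of $\vts$. I can therefore collapse all legal KR1 steps into a single finite accumulated premise set $\Gamma'$ at the target instant $\vts$, after which every remaining rule (A1--A5, K, D, B4, B5, L1, L2, Premise) operates at that one instant. Deciding derivability in this \emph{static} combined logic is then exactly where the equivalence in the style of \cite{FHM+95rk}[Theorem 7.3.1], already invoked in Section~\ref{sec:semantics-of-rtkbl:rt}, applies: deriving $\phi$ from the finite $\Gamma'$ is equivalent to validity of $(\bigwedge \Gamma') \implies \phi$ in the multimodal logic whose knowledge modalities satisfy \sfiveaxioma\ and whose belief modalities satisfy \kdfourfiveaxioma, linked by L1 and L2. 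This is precisely the knowledge--belief logic axiomatised by Halpern~\etal in~\cite{HSS09dktb}; being a finitely axiomatised normal multimodal logic with the finite-model property, its validity problem is decidable, which closes the epistemic cases. (The quantifier-free, ground shape of \ekb\ formulae guarantees the ``first-order tautology'' side condition of A1 is over finitely many ground atoms, so it too reduces to a decidable propositional/equality check.) Composing the decidable recursive clauses yields a terminating decision procedure for $\vtrace \sat \phi$.

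The step I expect to be the main obstacle is the decidability of the static knowledge--belief logic together with the clean reduction of $\der$ to validity in it: one must verify that the bridge axioms L1 and L2 do not destroy the finite-model property of the \sfiveaxioma/\kdfourfiveaxioma\ fusion, and that the Fagin-style reduction survives the presence of belief modalities and the Premise/KR1 bookkeeping. Everything else is routine finite enumeration once the finiteness of traces, domains and \ekbs\ is in place.
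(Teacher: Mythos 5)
Your proposal is correct and takes essentially the same route as the paper's own proof: expand the quantifiers over the finite domains and the finite $\timestampset_\vtrace$, decide the atomic and Boolean cases by direct inspection of the finite trace, and reduce the epistemic clauses to a decidable derivability query, exploiting that only finitely many KR1 applications are legal within the window $\window$. The only difference is presentational: the paper instantiates all axioms of Table~\ref{tab:ekb-axioms} together with the finitely many legal KR1 instances into one finite axiom set $\Axioms$ and issues a single query to a standard multimodal epistemic engine (model checking a Kripke structure, in PSPACE per~\cite{FHM+95rk}), whereas you collapse the KR1 steps into an accumulated premise set and appeal to the decidability of the static knowledge--belief logic of~\cite{HSS09dktb} --- the same reduction in substance, with your KR1 permutation step playing the role of the paper's finite KR1 instantiation.
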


\begin{proof}
  Given a trace $\vtrace \in \wftraceset$, a formula $\phi \in \frtkbl$ and a
  window $\vwindow$.
  We show that determining $\vtrace \sat \phi$ is decidable by providing a naive
  model checking algorithm which implements the semantics of \rtkbl, adapting
  the proof for the simpler logic \tkbl\ from~\cite{PKSS16sepposn}.

  We first expand the universal quantifiers in $\phi$ by inductively
  transforming each subformula $\forall x \colon D \cdot \phi'$ and $\forall
  \vts \cdot \phi'$ into a conjunction with one conjunct $\phi'[v/x]$ for each
  element $v$ in the domain $D$ or $\timestampset_\vtrace$.
  All domains---including $\timestampset_\vtrace$---are finite.
  The resulting formula is quantifier free and has size $O(|\phi|\times d^q)$
  where $d$ is a bound on the size of the domain and $q$ is the maximum nested
  stack of quantifiers.
  Let $\phi_1,\ldots,\phi_m$ be the subformulas of the resulting formula,
  ordered respecting the subformula relation.
  An easy induction on $k<m$ shows that we can label every agent and at every
  step of the trace with either $\phi_k$ or $\neg \phi_k$.
  The labelling proceeds from the earliest time-stamp on.
  We begin with the atomic part:
  \begin{compactitem}
    \item
      Checking $\connectionPredicate{t}{e,f}$ and $\actionPredicate{t}{e,f}$ can
      be performed in constant time, simply by checking the model at the given
      instant $t$, for every agent.

    \item
      Checking $\predicate{t}(\vv{s})$ at a given instant $t$ requires one query
      to the epistemic reasoning engine for $\vekb{\vtrace[t]}{e}$ (for the
      environment agent $e$ and time stamp $t$).

    \item
      Checking $\occur{\vts}(e)$ can be performed in constant time by checking
      the set of events $e \in E$ of $\vtrace$ at time $\vts$.

  \end{compactitem}
  Then, for the epistemic part we first resolve all operators:
  \begin{compactitem}
  \item Checking $\psi_k=\neg \psi_j$ and $\psi_k=\psi_j \land \psi_i$
    can be done in constant time for each instant $t$ and agent $i$,
    using the induction hypothesis.

  \item
    First, we construct a set \Axioms\ where we instantiate all the axioms in
    Table~\ref{tab:ekb-axioms} for each $t \in \timestampset_\vtrace$. The
    resulting set has size $|\Axioms| = |\timestampset_\vtrace| \times 11$
    (number of axioms in Table~\ref{tab:ekb-axioms}).
    Secondly, we instantiate KR1 (cf. Table~\ref{tab:ekb-axioms}), for
    $\vwindow$ and for all $t, t' \in \timestampset_\vtrace$ such that $t > t'$
    and $t - t' < \vwindow$.
    The resulting set of axioms has size $O(\sum^{|\timestampset_\vtrace| -
    1}_{n = 1} n \times \vwindow)$.
    That is, all legal combinations of timestamps ($n$) times the window size
    ($\vwindow$).
    These axioms are also included in \Axioms, which, consequently, contains a
    finite set of axioms.
    Finally, checking $K^t_i \psi_j$ and $B^t_i \psi_j$ require one query $ dkd$
    to the epistemic engine for $\Axioms,
    \bigcup_{\{t'|t'<t\in\timestampset_\vtrace\}}\vekb{\vtrace[t]}{i} \der
    \psi_j$.
    The previous query is equivalent to model checking a Kripke structure where
    relations are labelled with triples $(i,t,\vwindow)$. Solving this problem
    is known to be in PSPACE~\cite{FHM+95rk}.

  \end{compactitem}
  It is easy to see that the semantics of \rtkbl{} is captured by this
  algorithm.
  %
\end{proof}
%

\subsection{Properties of the framework}
In this section we present interesting properties of our framework, together with a set of novel derived operators not present in traditional epistemic logics.

\paragraph{To Learn or not to learn; To believe or not to believe}
In \tkbl\ we introduced the learning modality $L_i \phi$ \cite{PKSS16sepposn}.
It stands for $i$ learnt $\phi$ at a moment $t$ when the formula is being evaluated.
Here $L_i \phi$, or more precisely, $\Learnt{t}{i} \phi$ becomes a derived operator.

We say that an agent has learnt $\phi$ at time $\vts$, if she knows $\phi$ at time $\vts$ but she did not know it for any previous timestamp.
Formally, we define it in terms of $\Know{t}{i} \phi$ as follows
$$ \Learnt{t}{i} \phi \triangleq \neg \Know{\pre(t)}{i} \phi \wedge \Know{t}{i} \phi.$$
We can also model when users start to believe something, or \emph{accept} a belief.
To model this concept we define the acceptance operator.
As before, it can be expressed using $\Believe{\vts}{\vagi}$
$$ \LearnBelieve{t}{i} \phi \triangleq \neg \Believe{\pre(t)}{i} \phi \wedge \Believe{t}{i} \phi.$$

Analogously we can express when users \emph{forget} some knowledge or when they \emph{reject} a belief.
Intuitively, an agent forgets $\phi$ at time $\vts$ if she knew it in the previous timestamp, i.e., $\pre(\vts)$---recall the definition of \pre\ in Section~\ref{sec:evolving-osns:rt}---and in \vts\ she does not know $\phi$, and, analogously, for reject.
Formally,
$$ \Forget{t}{i} \phi \triangleq \Know{\pre(t)}{i} \phi \wedge \neg \Know{t}{i} \phi $$
$$ \ForgetBelieve{t}{i} \phi \triangleq \Believe{\pre(t)}{i} \phi \wedge \neg \Believe{t}{i} \phi. $$


\paragraph{Temporal modalities}
The traditional temporal modalities \al\ and \ev\ can easily be defined using quantification over timestamps as follows:
$$ \al \phi(t) \triangleq \forall t \cdot \phi(t) $$
$$ \ev \phi(t) \triangleq \exists t \cdot \phi(t) $$
where $\phi(t)$ is a formula $\phi$ which depends on $t$.

\paragraph{How long do agents remember?}
Agents remember according to the length of the parameter $\window$.
It can be seen as the size of their memory.
Intuitively, increasing agents memory could only increase her knowledge.
We prove this property in the following lemma.
\begin{lemma}[Increasing window and Knowledge]
  Given \vtrace,
  $\vts \in \vtrace$ and
  $\vwindow, \vwindow' \in \nat$ where $\vwindow \leq \vwindow'$.

  If $\vekb{\vtrace[\vts]}{\vagi} \der (\Know{\vts}{\vagi} \phi, \vwindow)$,
  then $\vekb{\vtrace[\vts]}{\vagi} \der (\Know{\vts}{\vagi} \phi, \vwindow')$.
  \label{lemma:increase-window}
\end{lemma}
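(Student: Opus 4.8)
The plan is to prove the lemma by a direct transformation of derivations: given a timed derivation that witnesses $\vekb{\vtrace[\vts]}{\vagi} \der (\Know{\vts}{\vagi} \phi, \vwindow)$, I would produce a derivation witnessing the very same conclusion with the larger window by shifting every window appearing in the derivation upward by the constant $\delta = \vwindow' - \vwindow \geq 0$.

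Concretely, by Def.~\ref{def:derivation} the hypothesis supplies a finite sequence of pairs $(\phi_1, \vwindow_1), \ldots, (\phi_n, \vwindow_n)$ with $(\phi_n, \vwindow_n) = (\Know{\vts}{\vagi} \phi, \vwindow)$, in which every $\phi_i$ is obtained from earlier entries by a deduction rule of \dr\ respecting the monotonicity side-condition $\vwindow_j \leq \vwindow_i$ on the premises used. I would then consider the shifted sequence $(\phi_1, \vwindow_1 + \delta), \ldots, (\phi_n, \vwindow_n + \delta)$ and argue, by induction on the derivation length, that each prefix is again a valid timed derivation. Its final pair is $(\Know{\vts}{\vagi} \phi, \vwindow + \delta) = (\Know{\vts}{\vagi} \phi, \vwindow')$, which is exactly the desired conclusion.

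The inductive step reduces to checking that each rule of \dr\ is preserved under the uniform shift. For every rule except KR1---that is, A1--A5, K, D, B4, B5, L1, L2 and Premise (cf.\ Table~\ref{tab:kb-deduction})---the premises and the conclusion carry the same window, so adding $\delta$ to every entry leaves each such instance valid. The only window-altering rule is KR1, whose premise window $\vwindow - (\vts' - \vts)$ and conclusion window $\vwindow$ differ by the fixed quantity $(\vts' - \vts)$ determined by the time-stamps alone; this difference is invariant under adding the same $\delta$ to both, so KR1 remains applicable after the shift. Finally, the side-condition $\vwindow_j \leq \vwindow_i$ is stable under adding $\delta$ to both sides, and every shifted window stays in \nat\ since $\delta \geq 0$, so the shifted sequence indeed satisfies Def.~\ref{def:derivation}.

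I do not anticipate a genuine obstacle: the entire content of the argument is that KR1, the single rule that modifies the window, changes it by an amount depending only on $\vts$ and $\vts'$ and never on the window itself, so it commutes with a uniform additive shift. Everything else is bookkeeping confirming that the transformed sequence remains a legal timed derivation.
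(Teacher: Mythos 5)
Your proposal is correct and follows essentially the same route as the paper's proof: both take the given timed derivation and shift every window by the constant difference $\vwindow' - \vwindow$, then verify rule by rule that the shifted sequence remains valid, with KR1 as the only nontrivial case since its window change depends solely on the time-stamps and hence commutes with a uniform additive shift. Your explicit check of the side-condition $\vwindow_j \leq \vwindow_i$ is a small extra piece of bookkeeping the paper leaves implicit, but it does not constitute a different argument.
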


\begin{proof}
  Assume $\vekb{\vtrace[\vts]}{\vagi} \der (\Know{\vts}{\vagi} \phi, \vwindow)$.
  By Definition~\ref{def:derivation}, there exists a derivation
  $(\phi_1, \vwindow_1)~(\phi_2, \vwindow_2) \ldots (\phi_n, \vwindow_n)$ $\text{ for } n \in \nat$
  such that $(\phi_n, \vwindow_n) = (\phi, \vwindow)$.
  Let $\alpha = \vwindow' - \vwindow$,
  since $\vwindow \leq \vwindow'$ it follows that $\alpha \geq 0$.

  Consider now the following derivation where the same deduction rules as in the previous derivation has been applied, and $\alpha$ is added to each $\vwindow_i$,
  $(\phi_1, \vwindow_1 + \alpha)~(\phi_2, \vwindow_2 + \alpha) \ldots (\phi_n, \vwindow_n + \alpha)$.
  We show now that, if $(\phi_1, \vwindow_1)~(\phi_2, \vwindow_2)$ using a deduction rule $R$ then $(\phi_1, \vwindow_1 + \alpha)~(\phi_2, \vwindow_2 + \alpha)$ can also be derived using $R$, for all $R$ in Table~\ref{tab:kb-deduction}. We split the proof in derivation rules which, copy, reduce or introduce $\vwindow$.
  \begin{itemize}
    \item Rules that copy $\vwindow$. These are, A2, A3, A4, A5, K, B4, B5, L1 and L2. If $\vwindow \in \nat$ given that $\alpha \geq 0$  it trivially follows that $\vwindow + \alpha \in \nat$ which complies with the conditions of any of these rules. In this case $\vwindow = \vwindow'$ therefore the same applies to $\vwindow'$.

    \item Rules that reduce $\vwindow$. This is, KR1. In this case $\vwindow < \vwindow'$. In order for the derivation to be correct both $\vwindow$ and $\vwindow'$ are in $\nat$. Since $\alpha \in \nat$, it follows that $\vwindow + \alpha$ and $\vwindow' + \alpha$ are in $\nat$. Moreover, since $\alpha$ is a constant it also follows that $\vwindow - \vwindow' = (\vwindow + \alpha) - (\vwindow' + \alpha)$. From the previous statement we conclude that the same window increase is required and, therefore the same derivation is performed.

    \item Rules that introduce $\vwindow$. These are, A1, D and Premise. No conditions are imposed in the value of $\vwindow$ in order to apply these rules. Therefore, if they can be applied with window $\vwindow$ since $\alpha \geq 0$ they can also be applied with window $\vwindow + \alpha$.

  \end{itemize}
\end{proof}

We can characterise how long agents remember information depending on the parameters $\window$ and $\agentype$ of the framework.
We differentiate for how long agents remember knowledge or beliefs since the parameter $\agentype$ might influence it.

\begin{lemma}[$\window$ knowledge monotonicity]
  Given \vtrace\ and $\vts \in \timestampset_\vtrace$.
  If $\Know{\vts}{\vagi} \phi \in \vekb{\vtrace[\vts]}{\vagi}$ then
  for all $\vts' \in \timestampset_\vtrace$ such that $\vts \leq \vts' \leq \vts + \window$
  it holds $\vtrace \sat \Know{\vts'}{\vagi} \phi$.\qed
\end{lemma}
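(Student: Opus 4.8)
The plan is to prove the lemma \emph{constructively}, by exhibiting for each admissible $\vts'$ a short timed derivation that witnesses $\vtrace \sat \Know{\vts'}{\vagi} \phi$. By the satisfaction relation (Fig.~\ref{fig:satisfaction-relation:rt}), establishing $\vtrace \sat \Know{\vts'}{\vagi} \phi$ amounts to deriving $\phi$ with window $\window$ from the knowledge base of $\vagi$ available for the query at $\vts'$---equivalently, by the Fagin-style reading of $\der$, to establishing that $\vagi$ knows $\phi$ at $\vts'$. The witnessing derivation needs only three deduction rules applied in sequence: \premise{} to introduce the hypothesis, KR1 to transport it forward in time, and (if the conclusion is required as a bare formula) a closing A3. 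No induction on the structure of $\phi$ is needed, since $\phi$ is treated as an opaque payload throughout.

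First I would fix $\vts' \in \timestampset_\vtrace$ with $\vts \le \vts' \le \vts + \window$ and unfold the goal. The given fact $\Know{\vts}{\vagi} \phi \in \vekb{\vtrace[\vts]}{\vagi}$ places this formula among the premises that the derivation for $\Know{\vts'}{\vagi} \phi$ may draw upon. Since \premise{} is one of the rules that introduce a window without constraint, it yields $\der (\Know{\vts}{\vagi} \phi, \vwindow)$ for \emph{any} $\vwindow \in \nat$; the crucial choice is to take $\vwindow = \window - (\vts' - \vts)$, i.e.\ exactly the window budget that remains after the intended time shift.

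The core step is the window arithmetic of KR1. Because $\vts' \le \vts + \window$, we have $\vts' - \vts \le \window$, so $\window - (\vts' - \vts) \ge 0$ is a genuine natural number, as demanded by Def.~\ref{def:derivation}. Applying KR1 to the pair $(\Know{\vts}{\vagi} \phi,\; \window - (\vts' - \vts))$ then produces $(\Know{\vts'}{\vagi} \phi,\; \window)$, consuming precisely the $\vts' - \vts$ units that separate the two instants. This two-step sequence is a legal timed derivation because its windows are non-decreasing, $\window - (\vts' - \vts) \le \window$, which is exactly the monotonicity condition of Def.~\ref{def:derivation}. If the satisfaction relation is read as requiring the bare formula $\phi$ rather than $\Know{\vts'}{\vagi} \phi$ at the conclusion, a final application of A3---which copies the window---supplies $(\phi, \window)$ without disturbing the monotonicity of the sequence.

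Finally I would dispatch the boundary case $\vts' = \vts$ separately, since KR1 is stated only for $\vts < \vts'$: here \premise{} already delivers $(\Know{\vts}{\vagi} \phi, \window)$ directly (followed by A3 if a bare conclusion is needed), and no propagation is required. I expect the only delicate point of the whole argument to be this window bookkeeping---verifying that $\window - (\vts' - \vts)$ never leaves $\nat$ and that the windows stay non-decreasing so that Def.~\ref{def:derivation} is met; everything else is a direct instantiation of KR1 that is completely insensitive to the form of $\phi$.
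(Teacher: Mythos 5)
Your proof is correct, and it reaches the conclusion by a slightly different decomposition than the paper's. The paper's proof handles only the extremal instant directly: it introduces $(\Know{\vts}{\vagi}\phi, 0)$ by \premise{}, applies KR1 once to land on $(\Know{\vts+\window}{\vagi}\phi, \window)$, and then dispatches every intermediate $\vts''$ with $\vts \leq \vts'' < \vts + \window$ by appealing to Lemma~\ref{lemma:increase-window} to pad the smaller window $\vts''-\vts$ up to $\window$. You instead exploit the fact that \premise{} introduces an \emph{arbitrary} window, choosing it to be exactly the residual budget $\window - (\vts'-\vts)$ so that a single KR1 step lands on window $\window$ for each admissible $\vts'$ directly; this makes Lemma~\ref{lemma:increase-window} unnecessary and keeps the whole argument to a two-step derivation whose window monotonicity ($\window-(\vts'-\vts)\leq\window$) is immediate from Def.~\ref{def:derivation}. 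Your version has two further small advantages: you explicitly treat the boundary case $\vts'=\vts$, where KR1 is inapplicable because it requires $\vts<\vts'$ (the paper covers this instant only implicitly), and you never need to apply KR1 toward $\vts+\window$ itself, which in the paper's proof is tacitly assumed to lie in $\timestampset_\vtrace$ even though the lemma only quantifies over $\vts'\in\timestampset_\vtrace$. One shared blemish, not counted against you: the satisfaction relation as printed takes the union of knowledge bases over $\vts''<\vts'$ with \emph{strict} inequality, so at $\vts'=\vts$ the hypothesis $\Know{\vts}{\vagi}\phi\in\vekb{\vtrace[\vts]}{\vagi}$ is, read literally, not among the available premises; both your argument and the paper's silently read the union as inclusive, which is evidently the intended semantics.
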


\begin{proof}
  Assume $\Know{\vts}{\vagi} \phi \in \vekb{\vtrace[\vts]}{\vagi}$.
  By premise we can derive $(\Know{\vts}{\vagi} \phi, 0)$.
  Let $\vts' = \vts + \window$, by applying KR1 we can derive $(\Know{\vts'}{\vagi} \phi, \window)$.
  By $\sat$ we conclude $\vtrace \sat \Know{\vts'}{\vagi} \phi$.
  Given the above and by Lemma~\ref{lemma:increase-window}, for all $\vts \leq \vts'' < \vts + \window$ it always possible to derive
  $(\Know{\vts''}{\vagi} \phi, \window)$.
\end{proof}

This parameter give us a lot of flexibility in modelling agents memories. By
choosing $\window = \infty$ we can model agents with \emph{perfect recall}, i.e,
agents that never forget. On the other hand, we can also use $\window = 0$,
i.e., agents who do not remember anything.

When $\agentype = \conservative$, memories about beliefs behave in the
same way as knowledge.
Similarly monotonicity results can be proven for beliefs as the lemmas
above.
For example, if $\agentype=\conservative$ then beliefs are preserved until
either forgotten---due to $\window$---or to contradictory knowledge.
Similarly, if $\agentype=\susceptible$, susceptible agents can reject
a belief when exposed to new contradictory beliefs.
Therefore, the duration of their beliefs can be limited by an event
introducing new beliefs in the \ekbs.
Other versions of $\agentype$ are possible, for example based on the
reputation of the agent that emits the information.
It is also possible to consider different $\window$ for different
pieces of information.

\section{Writing Privacy Policies}
\label{sec:rtppl}

In this section we provide a language for writing privacy polices, \rtppl.
In a nutshell \rtppl\ is a restricted version of \rtkbl\ wrapped with \rtpolicy{$~$}{\vagi}{s} to indicate the owner of the policy $\vagi$ and its starting time $s$.

\begin{definition}[Syntax of \rtppl{}]
\label{def:syntax-of-rtppl:rt}
   Given
   agents \( \vaga, \vagb \in \ag \),
   a nonempty set of agents \( G \subseteq \ag \),
   timestamps \( \vstart, \vts \),
   a variable \( x \),
   predicate symbols \( \connectionPredicate{\vts}{m}(\vaga, \vagb), \actionPredicate{\vts}{n}(\vaga, \vagb) \),
   \( \predicate{\vts}(\vterm) \),
   and a formula \( \vfa \in \frtkbl \),
   the \emph{syntax of the real-time privacy policy language} \rtppl{} is inductively defined as:

   \begin{center}
   \begin{tabular}{lcl}
      \( \vpolicy \) & \( ::= \) & \(
         \vpolicy \land \vpolicy \mid
         \forall x. \vpolicy \mid
         \pol{\neg \alpha}{\vag}^\vstart \mid
         \pol{\vfa \then \neg \alpha}{\vag}^\vstart
      \) \\
      \( \alpha \) & \( ::= \) & \(
         \alpha \land \alpha \mid
         \forall x. \alpha \mid
         \exists x. \alpha \mid
         \vfb \mid
         \vfc'
      \) \\
      \( \vfb \) & \( ::= \) & \(
         \connectionPredicate{\vts}{m}(\vaga, \vagb) \mid
         \actionPredicate{\vts}{n}(\vaga, \vagb) \mid
         \occur{\vts}(e)
      \) \\
      \( \vfc' \) & \( ::= \) & \(
         \Know{\vts}{\vag} \vfc \mid
         \Believe{\vts}{\vag} \vfc
      \) \\
      \( \vfc \) & \( ::= \) & \(
         \vfc \land \vfc \mid
         \neg \vfc \mid
         \predicate{\vts}(\vterm) \mid
         \vfc' \mid
         \vfb \mid
         \forall x. \vfc
      \) \\
   \end{tabular}
   \end{center}
\end{definition}

We will use \frtppl{} to denote the set of all privacy policies created according to the previous definition.
To determine whether a policy is violated in an evolving social network, we formalise the notion of conformance for \rtppl.

\begin{definition}[Conformance Relation]
\label{def:conformance-relation:rt}
   Given
   a well-formed trace \( \vtrace \in \traceset \),
   a variable \( x \),
   a timestamp \( \vstart \),
   and an agent \( \vag \in \ag \),
   the \emph{conformance relation \conf{}} is defined as shown in Fig.
   \ref{fig:conformance-relation:rt}.
\end{definition}

\begin{figure}
\begin{center}
\begin{tabular}{lcl}
   \( \vtrace \conf \forall x. \vpolicy \) &
   iff &
   for all \( v \in D_o \), \( \vtrace \conf \vpolicy[v / x] \) \\
   \( \vtrace \conf \pol{\neg \alpha}{\vag}^\vstart \) &
   iff &
   \( \vtrace \sat \neg \alpha \) \\
   \( \vtrace \conf \pol{\vfa \then \neg \alpha}{\vag}^\vstart \) &
   iff &
   \( \vtrace \sat \vfa \then \neg \alpha \) \\
\end{tabular}
\end{center}
\caption{Conformance relation for \rtppl}
\label{fig:conformance-relation:rt}
\end{figure}

The definition is quite simple, especially compared to that of conformance of \tppl{} \cite{PKSS16sepposn}.
If the policy is quantified, we substitute in the usual way.
The main body of the policy in double brackets is dealt with by simply delegating to the satisfaction relation.

\subsubsection{Examples}
\label{sec:rtppl-examples}


\begin{example}
   Assume Alice decides to hide all her weekend locations from her supervisor
   Bob. She has a number of options how to achieve this, depending on what the
   precise meaning of the policy should be.

   If the idea she has is to restrict Bob learning her weekend location
   directly when she posts it, she can define
   \[
      \vpolicy_1 = \forall \vts \cdot \pol{
         \mword{weekend}(\vts) \then \neg K_\vbob^\vts \mword{location}(\valice, \vts)
      }{\valice}^{\mts{2016-04-16}}
   \]
   where the \mword{weekend} predicate is true if the timestamp supplied
   represents a time during a weekend. This policy can be read as ``if \( x \)
   is a time instant during a weekend, then Bob is not allowed to learn at \( x
   \) Alice's location from time \( x \)''.

   This, however, is a very specialized scenario that captures only a small
   number of situations. Bob is, for example, free to learn Alice's location at
   any point not during the weekend, or at any point during the weekend when
   Alice's location is no longer up-to-date.  Though there might be scenarios
   where this might be the desired behavior, we can define a policy that seems
   much closer to the intuitive meaning of learning someone's location on a
   weekend. Consider
   \[
      \vpolicy_2 = \forall \vts \cdot  \pol{
         \mword{weekend}(\vts) \then \neg \exists \vts' \cdot (K_\vbob^{\vts'}
         \mword{location(\valice, \vts))}
      }{\valice}^{\mts{2016-04-16}}.
   \]
   Here, Bob is not allowed to learn Alice's location from a weekend, no matter
   when. If the policy does not get violated, then Alice's weekend locations
   will be completely safe from Bob -- on the social network, at least.
\end{example}

\begin{example}
   One of the advantages of \rtppl{} is the ability to distinguish between the
   original time of a piece of information and the time when it should be
   hidden. Suppose Diane activates the following policy:
   \[
      \vpolicy_1 = \forall {\vts} \cdot \forall x \colon \ag^{\vts} \cdot \pol{
         \neg \mword{friends}^{\vts}(\vdiane, x) \then \neg
         \exists \vts' \cdot (K_x^{\vts} \mword{post}^{\vts'}(\vdiane))
      }{\vdiane}^{\mts{2016-05-28}}
   \]

   This aims to prevent anyone who is not a friend of Diane's from learning any of her posts (here we assume that the \mword{friends} connection is not reflexive for simplicity, otherwise the restriction would target Diane herself, too).

   Though \( \vpolicy_1 \) may seem reasonable enough, it might be unnecessarily
   restrictive. Let us say there is another user, Ethan. Diane becomes friends
   with Ethan on May 31, so when her policy is already in effect. Should Ethan
   be able to learn about Diane's posts from when they were not friends? Not
   according to \( \vpolicy_1 \), which says that no one, regardless of their
   relationship with Diane at the moment, is able to learn about her posts from
   when they were not friends.

   Note that while this may indeed be the desired behaviour, it is, for example,
   not what happens on Facebook, where when two users become friends, they are
   free to access each other's timeline including past events and posts.
   \rtppl{} is expressive enough to model this behaviour as well. We can define:

   \begin{align*}
      \vpolicy_2 = \forall \vts \cdot \forall \vts' \cdot \forall x \colon \ag^{\vts} \pol{
         \neg \mword{friends}^{\vts}(\vdiane, x) \land
         \neg \mword{friends}^{\vts'}(\vdiane, x) \then
         \neg K_x^{\vts'} \mword{post}^{\vts}(\vdiane)
      }{\vdiane}^{\mts{2016-05-28}}
   \end{align*}

   This policy precisely defines the point in time \emph{from} when to hide
   information, \( y \), as well as the point in time \emph{when} to hide it, \(
   y' \). It says, ``if Diane is not friend with someone, then that someone
   cannot learn her posts, but only if they come from a time when they were not
   friends''. Note that \( \vpolicy_2 \) says nothing about users who are
   currently friends of Diane's, which is different from \( \vpolicy_1 \) --
   here her friends can learn anything, including past posts from when they were
   not friends with her.
\end{example}

Since checking conformance of \rtppl\ privacy policies reduces to model checking
the given trace the following corollary follows directly from
Theorem~\ref{thm:model-checking}.
\begin{corollary}
  Checking conformance of \rtppl\ policies is decidable.
\end{corollary}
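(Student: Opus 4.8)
The plan is to reduce conformance checking to a finite number of \rtkbl\ model checking queries and then invoke Theorem~\ref{thm:model-checking}. I would proceed by structural induction on the policy \vpolicy\ following the grammar of Definition~\ref{def:syntax-of-rtppl:rt}, showing at each case that deciding $\vtrace \conf \vpolicy$ is reducible to deciding finitely many instances of the satisfaction relation $\sat$.

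First I would handle the base cases, the two bracketed forms. By Fig.~\ref{fig:conformance-relation:rt}, conformance delegates directly to satisfaction: $\vtrace \conf \pol{\neg \alpha}{\vag}^\vstart$ iff $\vtrace \sat \neg \alpha$, and $\vtrace \conf \pol{\vfa \then \neg \alpha}{\vag}^\vstart$ iff $\vtrace \sat \vfa \then \neg \alpha$. The one thing worth checking carefully is that the right-hand sides are genuine members of \frtkbl, so that the theorem applies: the grammar builds $\alpha$ out of $\vfb$ (connection, action and occurrence predicates) and $\vfc'$ (time-stamped knowledge and belief over \rtkbl\ formulae), all of which are \rtkbl\ formulae, while $\vfa \in \frtkbl$ holds by assumption; negation and implication preserve membership in \frtkbl. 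Hence each base-case conformance check is exactly an \rtkbl\ model checking query and is decidable by Theorem~\ref{thm:model-checking}.

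For the inductive step I would treat conjunction and quantification. For $\vpolicy_1 \land \vpolicy_2$ the wrapper distributes componentwise, so conformance holds iff each conjunct conforms, and decidability follows from the induction hypothesis applied to the two structurally smaller subpolicies. For $\forall x.\, \vpolicy'$, Fig.~\ref{fig:conformance-relation:rt} gives $\vtrace \conf \forall x.\, \vpolicy'$ iff $\vtrace \conf \vpolicy'[v/x]$ for every $v \in \dom$; since every domain is finite at any point of the trace---the same fact already exploited in the proof of Theorem~\ref{thm:model-checking} to unfold quantifiers---this is a finite conjunction of conformance checks on the smaller policy $\vpolicy'[v/x]$, each decidable by the induction hypothesis. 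Assembling the cases gives an algorithm that decides $\vtrace \conf \vpolicy$ for any well-formed \vtrace\ and any \rtppl\ policy \vpolicy.

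The argument is essentially bookkeeping layered on top of Theorem~\ref{thm:model-checking}, and the only point that needs genuine care---the closest thing to an obstacle---is guaranteeing that the policy-level quantifier $\forall x.\, \vpolicy'$ unfolds into a finite conjunction, which rests on the finiteness of each domain \dom\ at every time-stamp. Everything truly epistemic, namely unfolding the $\Know{\vts}{\vag}$ and $\Believe{\vts}{\vag}$ modalities into derivation queries against the instantiated axioms of Table~\ref{tab:ekb-axioms} together with the window rule KR1, has already been discharged inside the model checking procedure, so no new epistemic reasoning is required here.
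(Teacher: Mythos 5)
Your proposal is correct and follows essentially the same route as the paper, which simply observes that conformance checking reduces to \rtkbl{} model checking of the trace and invokes Theorem~\ref{thm:model-checking}; you merely spell out the reduction (structural induction over the policy grammar, with finite unfolding of $\forall x.\,\vpolicy'$ over finite domains) that the paper leaves implicit.
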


\section{Related work}
\label{sec:related}

Specifying and reasoning about temporal properties in multi-agent systems using epistemic logic have been previously studied in a number of papers (e.g., in \cite{FHM+95rk} for {\em interpreted systems}; see also \cite{PKSS16sepposn} and references therein).
More recently, Moses \etal~have extended interpreted systems to enhance reasoning about past and future knowledge. In~\cite{ben2013agent} they extend $K_i$ with a time-stamp $K_{i,t}$, allowing for the expression of properties such as ``Alice knows at time 10 that Bob knew $p$ at time 1'', i.e., $K_{\alice,10}K_{\bob,1}~p$.
Thought there are some similarities between the work by Moses \etal~and ours, namely the use of time-stamps in the knowledge modality and the {\em moving} along a trace to place the evaluation in the ``right'' place, there are quite a few differences. First, we differ in the intended use of the logics: Moses \etal~use time to model delays in protocols, whereas our main motivation is to provide a rich privacy policy language for OSNs. Besides, our logic includes belief and other operators  not present in the timed versions introduced by them, and we have time-stamps associated with propositions. We claim, however, that \rtkbl~as at least as expressive as the logics introduced by Moses \etal, though this would need to be formally proved and for that we would need to relate our (non-standard) semantics with interpreted systems. This is left as future work.

Actions have also been taken into account in reasoning about believes and time.
For example, ~\cite{ZDDT15arbat} presents a logic for reasoning about actions
and time.
The logic includes a belief modality, actions and time-stamps for atoms,
modalities and actions.
In our work we do not focus on reasoning about action and time but on
defining dynamic privacy policies for OSNs.
In particular, our policies do not include reasoning about actions.
Also, \cite{ZDDT15arbat} cannot reason about knowledge.
Recently, Xiong~\etal\cite{XASZ17tlt} presented a logic to reason about belief
propagation in Twitter.
The logic includes an (untimed) belief modality and actions, which are used in a
dynamic logic fashion.
Their models are similar to our untimed
SNMs~\cite{PS14fpp,PardoBalliuSchneider2017}.
Even though we do not include actions, we use time-stamps and
knowledge modalities. 
Also, one of the main contributions of our paper is solving
inconsistent beliefs.

As mentioned in the introduction, our work solves the open issues and limitations described in our previous work \cite{PKSS16sepposn}. In that paper we introduced \tfppf, a temporal epistemic framework for describing policies for OSNs. \tfppf~relies on the temporal epistemic logic \tkbl~that allows to express temporal constraints using the classical {\em box} and {\em diamond} temporal operators. Neither the policy language nor the underlying logic \tkbl~have a notion of time: this is only used at the semantic level, allowing to move along a social network model trace in order to interpret the temporal operators. \rtppf~strictly extends \tfppf, so our work not only addressed the already identified limitations identified in \cite{PKSS16sepposn} but also extends that work by allowing the definition of more modalities (e.g., {\em forget}, {\em accept}, {\em belief}) allowing for a more expressive policy language and underlying logic. These allow us to define policies for more complex OSNs, like Snapchat.

Besides the above, it is worth mentioning the work by Wo\'zna \& Lomuscio \cite{WL04lkcrt} where TCTLKD s presented. TCTLKD is a combination of epistemic logic, CTL, a deontic modality and real time.
It is difficult to compare our logic \rtkbl~with TCTLKD as they use CTL, while we have time-stamps in the propositions. The models used to interpret formulae in TCTLKD are based on a semantics for a branching logic, being a combination of timed automata and interpreted systems plus an equivalence relation for modelling permission. Ours is based on (timed) social network models. Besides we can also reason about belief.


%
%

\section{Conclusions}
\label{sec:conclusion}

In this paper, we have presented a novel privacy policy framework
based on a logic that offers explicitly support for expressing
timestamps in events and epistemic operators.
This framework extends~\cite{PardoLic,PS14fpp}, which did not offer
any support for time, and~\cite{PKSS16sepposn} which only had limited
support due to the implicit treatment of time.
Our framework is based on Extended Knowledge Bases.
A query to an EKB starts by instantiating a number of epistemic axioms that
handle knowledge, belief and time (the concrete axioms depend on the OSN
instantiation).
The deductive proof system give an algorithm to deduce the
knowledge of agents acquired at each instant, and in turn a model checking
algorithm for the logic and a check for privacy policy violations.
The explicit time-stamps allow to define learning and forget operators
that capture when knowledge is acquired.
Similarly, one can derive accept and reject operators that model when beliefs
come into existence and are rejected.

The flexibility of the EKBs allows to model different kinds of OSNs in
terms of how the actions affect knowledge and how this knowledge is
preserved through time.
For instance, we can define eternal OSNs like Facebook and ephemeral OSNs like
Snapchat.

Two important avenues for future research are the following. First,
many instantiations enable efficient implementations of checking
privacy policy violations by exploiting whether events can affect the
knowledge of the agents involved. Once the effect of the actions is
fixed one can prove that a distributed algorithm guarantees the same
outcome as the centralized algorithm proposed here. For example, tweets
can only affect the knowledge of subscribers so all other users are
unaffected. Second, once an effective system to check policy
violations is in place, there are different possibilities that the OSN
can offer. One is to enforce the policy by forbidding the action that the
last agent executed, which would lead to the violation. Another can be
the analysis of the trace to assign blame (and correspondingly,
reputation) to the agents involved in the chain of actions. For
example, the creator of a gossip or fake news may be held more
responsible than users forwarding them. Even a finer analysis of
controllability can give more powerful algorithms by detecting which
agents could have prevented the information flow that lead to the
violation. Finally, yet another possibility would be to remove past
events from the history trace of the OSN creating a pruned trace with
no violation. All these possibilities are enabled by having a formal
framework like the one presented in this paper.



\bibliographystyle{ACM-Reference-Format}
\bibliography{references}

%
%

\end{document}